\newtheorem{observation}[theorem]{Observation}
\newcommand{\problemdef}[3]{%
  \begin{center}
    \begin{minipage}{0.95\columnwidth}
      \noindent
      \textsc{#1}
      \vspace{5pt}\\
      \setlength{\tabcolsep}{3pt}%
      \begin{tabular}{@{}l p{0.8\columnwidth}@{}}
        \textbf{Input:}    & #2 \\
        \textbf{Question:} & #3
      \end{tabular}
    \end{minipage}
  \end{center}
}
\newcommand{\XP}{\textrm{XP}\xspace}
\newcommand{\FPT}{\textrm{FPT}\xspace}
\newcommand{\NP}{\textrm{NP}\xspace}
\newcommand{\coNP}{\textrm{coNP}\xspace}
\newcommand{\N}{\mathbb{N}}
\newcommand{\Z}{\mathbb{Z}}
\newcommand{\Q}{\mathbb{Q}}
\newcommand{\R}{\mathbb{R}}
\newcommand{\conv}{\operatorname{conv}}
\newcommand{\pos}{\operatorname{pos}}
\newcommand{\spn}{\operatorname{span}}
\newcommand{\rank}{\operatorname{rank}}
\newcommand{\poly}{\operatorname{poly}}
\newcommand{\inner}[1]{\langle #1 \rangle}
\newcommand{\ReLUInj}{\textsc{ReLU-Layer Injectivity}\xspace}
\newcommand{\lReLUInj}{\textsc{$\ell$-ReLU-layer Injectivity}\xspace}
\newcommand{\ReLUPos}{\textsc{2-Layer ReLU Positivity}\xspace}
\newcommand{\ReLUNonInj}{\textsc{ReLU-Layer Non-Injectivity}\xspace}
\newcommand{\AcycDisc}{\textsc{Acyclic $2$-Disconnection}\xspace}
\Crefname{theorem}{Theorem}{Theorems}
\crefname{theorem}{Thm.}{Thms.}
\Crefname{corollary}{Corollary}{Corollaries}
\crefname{corollary}{Cor.}{Cors.}
\Crefname{algorithm}{Algorithm}{Algorithms}
\Crefname{figure}{Figure}{Figures}
\Crefname{observation}{Observation}{Observations}
\newenvironment{proofwithcaption}[1]{

\begin{proof}
}{
\end{proof}
}
\title[Injectivity and Verification of ReLU Neural Networks]{Complexity of Injectivity and Verification of ReLU Neural Networks}
\begin{document}

\maketitle

   \begin{abstract} 
   Neural networks with ReLU activation play a key role in modern machine learning. Understanding the functions represented by ReLU networks is a major topic in current research as this enables a better interpretability of learning processes. 

    Injectivity of a function computed by a ReLU network, that is, the question if different inputs to the network always lead to different outputs, plays a crucial role whenever invertibility of the function is required, such as, e.g., for inverse problems or generative models. The exact computational complexity of deciding injectivity was recently posed as an open problem (Puthawala et al.~[JMLR~2022]). We answer this question by proving \coNP-completeness. On the positive side, we show that the problem for a single ReLU-layer is still tractable for small input dimension; more precisely,
    we present a parameterized algorithm which yields fixed-parameter tractability with respect to the input dimension. 
    
In addition, we study the network verification problem which is to verify that certain inputs only yield specific outputs. This is of great importance since neural networks are increasingly used in safety-critical systems. We prove that network verification is coNP-hard for a general class of input domains. Our results also exclude constant-factor polynomial-time approximations for
         the maximum  of a function computed by a ReLU network.
         
In this context, we also characterize surjectivity of functions computed by ReLU networks with one-dimensional output which turns out to be the complement of a basic network verification task. 
          We reveal interesting connections to computational convexity by formulating the surjectivity problem as a zonotope containment problem.\footnote{Accepted for presentation at the
Conference on Learning Theory (COLT) 2025}
	\end{abstract}
	
	\section{Introduction}
         Neural networks with rectified linear units (ReLUs) are a widely used model in deep learning. In practice, neural networks are trained on finite datasets and are expected to generalize to new, unseen inputs. However, they often exhibit unexpected and erroneous behavior in response to minor input perturbations; see, e.g., \cite{SzegedyZSBEGF13}.
        Hence, the certification of trained networks is of great importance and necessitates a thorough understanding of essential properties of the function computed by a ReLU network.
        
        Network verification, that is, the question whether for all inputs from a given subset~$\mathcal X$, the ReLU network outputs a value contained in a given set~$\mathcal Y$, is a research field gaining high interest recently since neural networks are increasingly used in safety-critical systems like autonomous vehicles (\cite{bojarski2016endendlearningselfdriving}) and collision avoidance for drones (\cite{aircraft}), see, e.g.,~\cite{Weng18,RP21,KL21,KBDJK22}. 
        Typically,~$\mathcal X$ and~$\mathcal Y$ are balls or defined by some linear constraints and the problem is known to be \coNP-hard if the sets $\mathcal{X}$ and $\mathcal{Y}$ are part of the input; see~\cite{KBDJK22,Weng18,salzer2023}.

        Moreover, recent works focus on studying elementary properties of functions computed by ReLU networks such as injectivity; see, e.g.,~\cite{PKLDH22,HEB23,FPLH23}.
	A function is called \emph{injective} if no two different inputs are mapped to the same output. Injectivity plays a crucial role in many applications where invertibility of a neural network is necessary (\citet{Gomez,Kingma,ardizzone2018analyzing,InverttoInvert}). Such applications include, for example, generative models, inverse problems, or likelihood estimation; see \cite{PKLDH22,FPLH23} for a more detailed discussion. On the other hand, injectivity might cause privacy issues since the input might be inferred from the output.
   
        Formally, for any number of layers $\ell \in \N$, given weights of a fully connected ReLU network $f\colon \R^d \to \R^m$ with $\ell$ layers, the question is whether the map~$f$ is injective.
        \citet{PKLDH22} give a mathematical characterization for injectivity of a single ReLU-layer (without bias), which implies an exponential-time algorithm where the exponential part is upper bounded by~$\min\{2^m,m^d\}$.
        In terms of \emph{parameterized complexity}, the problem is \emph{fixed-parameter tractable}
        with respect to the number~$m$ of ReLUs; that is, the superpolynomial part depends only on~$m$.
        With respect to the input dimension~$d$, the problem is in the complexity class XP; that is, it can be solved in polynomial time for any constant~$d$.
        The complexity status of deciding injectivity of an $\ell$-layer ReLU neural network (and therefore also, in particular, a  single layer), however, is unresolved and posed as an open problem by \cite{PKLDH22}.

        Another natural and fundamental property of functions to consider is surjectivity. A function \mbox{$f:\R^d\to\R^m$} is called \emph{surjective} if~$f(\R^d)=\R^m$. 
        Hence, to complete the picture alongside injectivity, we initiate the study of surjectivity for ReLU networks, which, to the best of our knowledge, has not been considered before.  In fact, we show that deciding surjectivity can be considered as the complement of a certain network verification task.
        \vspace{-0.25cm}
        \paragraph*{Our Contributions.}
        \label{paragraph:contributions}
        After some preliminaries in \Cref{sec:prelims},
        we show in \Cref{sec:complexity-injectivity} that deciding injectivity of a ReLU network with $\ell$ layers is \coNP-complete (\Cref{cor:injcoNPcomplete}) for any $\ell \in \N$ and
        thus not polynomial-time solvable, unless P $=$ NP. Notably, our hardness reduction reveals some interesting connections between cut problems in (di)graphs and properties of ReLU networks via graphical hyperplane arrangements. We believe that these connections between seemingly unrelated areas are of independent interest.
        Moreover, our hardness result implies a running time lower bound of~$2^{\Omega(m)}$ for a single ReLU-layer with $m$ neurons based on the Exponential Time Hypothesis (\Cref{cor:ReLU-lower}).
        Hence, the running time~$2^m$ is essentially optimal.
        As regards the input dimension~$d$, however, in \Cref{sec:FPT-algorithm} we give an improved algorithm running in~$O\bigl((d+1)^d\cdot\poly\bigr)$ time. This yields fixed-parameter tractability (\Cref{thm:fpt})
        and settles the complexity status for injectivity of a ReLU-layer.
    
         \citet{KBDJK22} and \citet{salzer2023} show that the network verification problem is \coNP-hard for a single hidden layer. However, their hardness results are based on the unit cube as input set. Thus, it was not clear so far whether there are other special input sets on which network verification is solvable in polynomial time. In \Cref{sec:surjectivity}, we provide the strongest hardness result for network verification to date (\Cref{prop:verification_hard}), proving coNP-hardness for every possible input domain $\mathcal{X}$ that contains a ball  (of possibly lower dimension under some mild conditions). In particular, our result implies that the computational intractability does not stem from choosing a particular set~$\mathcal{X}$, but is rather an intrinsic property of a single hidden ReLU-layer. Since arguably any reasonable input domain contains some ball (this holds, e.g., for all polyhedral sets satisfying the same mild conditions), no special case of verification regarding the input domain is solvable in polynomial time in the worst case. In particular, our result implies hardness for all polyhedra and also all balls as input set, which is not covered by the previous hardness results. In addition, our results also exclude constant-factor polynomial-time approximations for
         the maximum. As a consequence, one must make (very) specific assumptions on the network in order to obtain tractable cases, e.\;g., by building special (approximating) networks that are efficiently verifiable (\citet{Baader2020Universal},\cite{Zi},\cite{baader2024expressivity}).
 
         Moreover, we give a characterization of surjectivity for ReLU networks with one-dimensional output~(\Cref{lemma:charsurjectivity}) which implies a polynomial-time algorithm for constant input dimension~$d$.
        We then proceed with proving \NP-hardness of surjectivity by showing that it can be phrased as (the complement of) a special case of network verification.
        Finally, we also show that surjectivity can be formulated as a zonotope containment problem, which is of fundamental importance, e.g., in robotics (\cite{KULMBURG202184}).
 
        \vspace{-0.25cm}
	\paragraph*{Related Work.}
        \citet{PKLDH22} initiate the study of injectivity of ReLU networks. Their result implicitly yields an exponential-time algorithm.
	\citet{HEB23} use a frame-theoretic approach to study the injectivity of a ReLU-layer on a closed ball.
        They give an exponential-time algorithm to determine a bias vector for a given weight matrix such that the corresponding map is injective on a closed ball.
	\citet{FPLH23} study injectivity of ReLU-layers with linear neural operators.

       The network verification problem is known to be \coNP-hard for networks with two layers and arbitrary output dimension (\citet{KBDJK22} proved \NP-hardness for the complement). \citet{salzer2023} corrected some flaws in the proof and showed \coNP-hardness for one-dimensional output.
        The problem is also known to be \coNP-hard to approximate in polynomial time for one-dimensional output if the number of layers is unbounded~(\cite{Weng18}) and for three hidden layers and squashable activation functions~\cite{Zi}. Efficient verifiable networks with heuristics such as the 
 interval bound propagation \cite{Gowal2018OnTE} are shown to be universal approximators~\cite{Zi,Baader2020Universal}, but the networks might have exponential size. On the other hand, obtaining exact representations for all functions computed by ReLU neural networks is not possible such that they are precisely verifiable with interval bound propagation~\cite{mirman2022the} or single neuron convex relaxations~\cite{baader2024expressivity}.
Further heuristics include methods such as DeepPoly (\cite{Singh}), DeepZ (\cite{Wong}), general cutting planes (\cite{Cuttingplane}), multi neuron verification (\cite{ferrari2022complete}). Moreover, there exist libraries for verification (\cite{Library1,mao2024understandingcertifiedtraininginterval}).

        Parameterized complexity has also been studied for the training problem for ReLU neural networks by~\cite{FHN22,FH23}.
        In general, understanding the complexity/expressivity of ReLU neural networks is an important task~(\cite{HBDS21,AroraBMM18}).
	
	\section{Preliminaries}\label{sec:prelims}

        We introduce the basic definitions and concepts involving ReLU neural networks and their geometry which are relevant for this work.
	
	\begin{definition}
		A ReLU-layer with $d$ inputs, $m$ outputs, weights~$\mathbf{W}\in\R^{m\times d}$, and biases~$\mathbf{b}\in\R^m$ computes a map
			$\phi_{\mathbf{W},\mathbf{b}}\colon \R^d \to \R^m,\quad \mathbf{x}\mapsto[\mathbf{W}\mathbf{x}+\mathbf{b}]_+$,
		where $[\cdot]_+ \colon \R^m \to \R^m$ is the \emph{rectifier function} given by $[\mathbf{x}]_+\coloneqq(\max\{0,x_1\},\ldots,\max\{0, x_m\})$.
	\end{definition}
    In case that $\mathbf{b}=\mathbf{0}$, we omit~$\mathbf{b}$ and simplify notation~$\phi_{\mathbf{W}}\coloneqq\phi_{\mathbf{W},\mathbf{0}}$.  A deep ReLU network is just a concatenation of compatible ReLU-layers.
 \begin{definition}
     An $\ell$-layer ReLU network of architecture $(d=n_0,n_1,\ldots,n_{\ell-1},n_{\ell}=m)$ with weights $\mathbf{W}_{i} \in \R^{n_{i-1} \times n_{i}}$ and biases $\mathbf{b}_{i} \in \R^{n_{i}}$ for $i \in [\ell]$ computes a map
     \begin{align*}
			f \colon& \R^{d} \to \R^{m},\quad \mathbf{x}\mapsto \mathbf{W}_{\ell}\cdot(\phi_{\mathbf{W}_{\ell-1},\mathbf{b}_{\ell-1}} \circ \cdots \circ \phi_{\mathbf{W}_{1},\mathbf{b}_{1}})(\mathbf{x})+\mathbf{b}_{\ell}.
        \end{align*}
 \end{definition}
 Whenever we consider an $\ell$-layer ReLU neural network $f \colon \R^d \to \R^m$ as the input to a decision problem in the paper, we implicitly consider its weights and biases as the input.
%
\begin{wrapfigure}{r}{2.5cm}
   \begin{tikzpicture}[scale=0.6]
\draw[->] (-2.5,0) -- (2.5,0);
\draw[->] (0,-2.5) -- (0,2.5);

\draw[thick] (2,2) -- (-2,-2) node[anchor=north] {$H_{\mathbf{w}_1}$};
\draw[->] (-1,-1) -- (-1.2,-0.8);
\fill[red,opacity=0.1] (-2,-2) -- (2,2) -- (2,2) -- (-2,2) -- cycle;

\draw[thick] (-1,-2) -- (1,2) node[anchor=south] {$H_{\mathbf{w}_2}$};
\draw[->] (0.5,1) -- (0.7,0.9);
\fill[blue,opacity=0.1] (-1,-2) -- (1,2) -- (2,2) -- (2,-2) -- cycle;

\draw[thick] (1.5,-2) -- (-1.5,2) node[anchor=south east] {$H_{\mathbf{w}_3}$};
\draw[->] (0.75,-1) -- (1.05,-0.8);
\fill[yellow,opacity=0.1] (1.5,-2) -- (-1.5,2) -- (2,2) -- (2,-2) -- cycle;
\end{tikzpicture}
\caption{The polyhedral fan induced by an oriented linear hyperplane arrangement given by matrix $\mathbf{W} \in \R^{3 \times 2}$.}
 \label{fig:pol_fan}
\vspace{-0.5cm}
\end{wrapfigure}

\paragraph*{Geometry of ReLU-layers.}
    We review basic definitions from polyhedral geometry; see \citet{ilp_theory} for more details.
    For a vector $(\mathbf{w}_i,b_i) \in \R^{d+1}$, the hyperplane $H_{\mathbf{w}_i,b_i} \coloneqq \{\mathbf{x} \in \R^d \mid \inner{\mathbf{w}_i,\mathbf{x}}+b_i=0\}$ subdivides $\R^d$ into half-spaces $H^+_{\mathbf{w}_i,b_i} \coloneqq \{\mathbf{x} \in \R^d \mid \inner{\mathbf{w}_i,\mathbf{x}}+b_i\geq 0\}$ and $H^-_{\mathbf{w}_i,b_i} \coloneqq \{\mathbf{x} \in \R^d \mid \inner{\mathbf{w}_i,\mathbf{x}}+b_i\leq 0\}$.
    A \emph{polyhedron} $P$ is the intersection of finitely many closed half-spaces. A \emph{polyhedral cone} $C \subseteq \R^d$ is a polyhedron such that $\lambda u + \mu v \in C$ for every $u,v \in C$ and  $\lambda, \mu \in \R_{\geq 0}$.
     A hyperplane \emph{supports} $P$ if it bounds a closed half-space containing $P$, and
any intersection of $P$ with such a supporting hyperplane yields a \emph{face} $F$ of $P$.
    A \emph{polyhedral complex} $\mathcal P$ is a finite collection of polyhedra such that (i) $\emptyset \in \mathcal P$, (ii) if $P \in \mathcal P$ then all faces of $P$ are in $\mathcal P$, and (iii) if $P, P' \in \mathcal P$, then $P \cap P'$ is a face of both $P$ and $P'$. A \emph{polyhedral fan} is a polyhedral complex in which every polyhedron is a cone.  
    For a matrix $\mathbf{W} \in \R^{m \times d}$ and $\mathbf{b} \in \R^m$, the set of full-dimensional polyhedra (called \emph{cells})
    $\mathcal{C}_{\mathbf{W},\mathbf{b}} \coloneqq \biggl\{\bigcap_{i=1}^{m} H^{s_i}_{\mathbf{w}_i,b_i}\,\bigg|\, (s_1, \ldots, s_m) \in \{+,-\}^m,\, \dim\Bigl(\bigcap_{i=1}^{m} H^{s_i}_{\mathbf{w}_i,b_i}\Bigr)=d\biggr\}$
    subdivides $\R^d$ and the set $\Sigma_{\mathbf{W},\mathbf b}$ containing all cells and all their faces forms a polyhedral complex.
    If $\mathbf{b}=\mathbf{0}$, then $\Sigma_{\mathbf{W},\mathbf{b}}$ is a polyhedral fan. See  \Cref{fig:pol_fan} for an illustration of a $2$-dimensional polyhedral fan arising from a ReLU-layer.

    For a cell~$C\in\mathcal C_{\mathbf{W},\mathbf{b}}$, we define the \emph{active} set \mbox{$I_C\coloneqq\{j\in [m]\mid \forall\mathbf x \in C:\mathbf w_j\mathbf x +b_j \ge 0\}$ and} $\mathbf W_C\in\R^{m\times d}$, $\mathbf{b}_C \in \R^m$, with
		$(\mathbf{W}_C)_j \coloneqq  \begin{cases}
		\mathbf{w}_j, & j\in I_C,\\
		\mathbf{0}, & j\in[m]\setminus I_C,
              \end{cases}
        $ 
        $(\mathbf{b}_C)_j \coloneqq  \begin{cases}
		b_j, & j\in I_C,\\
		0, & j\in[m]\setminus I_C.
              \end{cases}$
        Note that the map $\phi_{\mathbf{W},\mathbf{b}}$ is affine linear on~$C$, namely $\phi_{\mathbf{W},\mathbf{b}}(\mathbf{x}) =\mathbf{W}_C\cdot\mathbf{x}+\mathbf{b}_C$ for~$\mathbf{x}\in C$.

        \paragraph*{Geometry of ReLU Neural Networks.}
        It is well-known that also a deep ReLU neural network $f$ partitions its input space into polyhedra on which $f$ is affine linear (\cite{HaninR19,grigsby}). More precisely, let $f_{i,j} = \pi_j \circ \phi_{\mathbf{W}_{i},\mathbf{b}_{i}} \circ \cdots \circ \phi_{\mathbf{W}_{1},\mathbf{b}_{1}}$, where $\pi_j \colon \R^{n_i} \to \R$ is the projection onto the $j$-th coordinate. Then, every $\mathbf{s}=(\mathbf{s}_1,\ldots,\mathbf{s}_{n_{\ell-1}}) \in\{-,+\}^{n_1} \times \ldots\times \{-,+\}^{n_{\ell-1}}$ corresponds to a (possibly empty) polyhedron \[P_{\mathbf{s}} =  \bigcap_{(\mathbf{s}_{i})_j = +} \{\mathbf x \in \R^d \mid f_{i,j}(\mathbf x) \geq 0\} \cap \bigcap_{(\mathbf{s}_{i})_j = -} \{\mathbf x \in \R^d \mid f_{i,j}(\mathbf x) \leq 0\}.\]
        The maps $f_{i,j}$ are affine linear on $P_{\mathbf{s}}$ and the coefficients are polynomially bounded in the weights and biases of the neural network.
        Hence, the polyhedron $P_{\mathbf{s}}$ arises as the intersection of at most $\sum_{i=1}^\ell n_\ell$ closed half-spaces whose encoding sizes are polynomially bounded in the weights and biases of~$f$. The set of all polyhedra $P_\mathbf{s}$ together with their faces forms a polyhedral complex which we denote by $\Sigma_f$.

        A \emph{ray} $\rho$ is a one-dimensional pointed cone; a vector $\mathbf r$ is a \emph{ray generator} of $\rho$ if $\rho = \{\lambda\, \mathbf r \mid \lambda \geq 0\}$.
         For a map $f$ computed by a ReLU network without bias and each ray~$\rho$ in~$\Sigma_f$, let $\mathbf r_\rho$ be the unique unit ray generator of~$\rho$ having norm $1$ and let $\mathcal{R} \coloneqq \{\mathbf r_\rho \mid \rho \text{ is a ray in } \Sigma_f\}$. A ray $\rho \subseteq C$ of a cone~$C$ is an \emph{extreme ray} if there do not exist $\lambda_1,\lambda_2 >0$ and $\rho_1,\rho_2 \subseteq C$ such that $\rho = \lambda_1 \rho_1 + \lambda_2 \rho_2.$
        We make the following simple observation.
        
	\begin{observation}
          \label[observation]{observation:linearoncells}        
          Let $C$ be a pointed polyhedral cone such that $f$ is linear on $C$ and let $\mathbf{r}_1, \ldots, \mathbf{r}_\ell$ be ray generators of the extreme rays of~$C$.
          Then, for each~$\mathbf x\in C$, there are $\lambda_1, \ldots, \lambda_\ell\in\R$ such that $\mathbf{x} = \sum_{i=1}^{\ell} \lambda_i \mathbf{r}_i$ and $f(\mathbf{x}) =\sum_{i=1}^{\ell}\lambda_i f (\mathbf{r}_i)$. 
        \end{observation}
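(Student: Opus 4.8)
The plan is to reduce the statement to two standard facts: (i) a pointed polyhedral cone is the conical hull of the generators of its extreme rays, and (ii) the restriction of $\phi_\mathbf{W}$ to $C$ agrees with a single linear map.

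First I would invoke the Minkowski--Weyl theorem for cones (as in the cited reference~\cite{ilp_theory}): since $C$ is a pointed polyhedral cone, it equals $\pos\{\mathbf r_1,\dots,\mathbf r_\ell\}$, where $\mathbf r_1,\dots,\mathbf r_\ell$ generate the extreme rays of $C$. Hence for every $\mathbf x\in C$ there exist $\lambda_1,\dots,\lambda_\ell\ge 0$ (in particular in $\R$) with $\mathbf x=\sum_{i=1}^\ell\lambda_i\mathbf r_i$. This takes care of the first claimed identity.

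Next I would use the hypothesis that $\phi_\mathbf{W}$ is linear on $C$: there is a matrix $\mathbf A\in\R^{m\times d}$ (for the cones $C\in\mathcal C_\mathbf W$ this is just $\mathbf W_C$) such that $\phi_\mathbf{W}(\mathbf y)=\mathbf A\mathbf y$ for all $\mathbf y\in C$. Since each $\mathbf r_i$ lies in $C$ and, by the first step, $\mathbf x=\sum_i\lambda_i\mathbf r_i\in C$, linearity of $\mathbf y\mapsto\mathbf A\mathbf y$ gives
\[
\phi_\mathbf{W}(\mathbf x)=\mathbf A\mathbf x=\mathbf A\Bigl(\sum_{i=1}^\ell\lambda_i\mathbf r_i\Bigr)=\sum_{i=1}^\ell\lambda_i\,\mathbf A\mathbf r_i=\sum_{i=1}^\ell\lambda_i\,\phi_\mathbf{W}(\mathbf r_i),
\]
which is the second identity. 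Note that it is essential that both $\mathbf x$ and all the $\mathbf r_i$ belong to $C$, so that the same linear map applies to all of them simultaneously; this is exactly what "linear on $C$" provides.

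There is no real obstacle here — the only thing to be careful about is citing the correct polyhedral fact (a pointed polyhedral cone is finitely generated by its extreme rays), and observing that nonnegativity of the $\lambda_i$ is automatic and in particular compatible with the weaker requirement $\lambda_i\in\R$ in the statement. Everything else is a one-line computation using linearity.
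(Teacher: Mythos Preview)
Your argument is correct. The paper states this as an observation without proof, merely calling it ``simple''; your two-step argument via Minkowski--Weyl for pointed cones and the identity $\phi_\mathbf{W}|_C=\mathbf W_C$ is exactly the intended justification.
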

        \noindent
\Cref{observation:linearoncells} implies that $f$ is determined by its values on $\mathcal{R}$, if all cones $C \in \Sigma_f$ are pointed.

    \paragraph*{(Parameterized) Complexity Theory.}
    We assume the reader to be familiar with basic concepts from classical complexity theory like P, \NP, and \NP-completeness. The class \coNP contains all decision problems whose complement is in \NP. A decision problem is \coNP-complete if and only if its complement is \NP-complete.
    Clearly, a \coNP-complete problem cannot be solved in polynomial time unless P $=$ \NP.

    A \emph{parameterized} problem consists of instances~$(x,k)$, where~$x$ encodes the classical instance and~$k\in\N$ is a \emph{parameter}. A parameterized problem is in the class \XP if it is polynomial-time solvable for every constant parameter value, that is, in $O(|x|^{f(k)})$ time for an arbitrary function~$f$ depending only on the parameter~$k$. A parameterized problem is \emph{fixed-parameter tractable} (contained in the class \FPT) if it is solvable in~$f(k)\cdot |x|^{O(1)}$ time, for an arbitrary function~$f$.
        Clearly, \FPT $\subseteq$ \XP; see \cite{DF13} for further details of parameterized complexity.

	\section{coNP-Completeness of Injectivity}
    \label{sec:complexity-injectivity}
        In this section we study the computational complexity of \lReLUInj, which is the problem to decide whether an $\ell$-layer ReLU network $f\colon \R^d\to\R^m$ computes an injective map.

        \begin{proposition}
            For every $\ell \in \N$, it holds that \lReLUInj is contained in \coNP.
        \end{proposition}
        \begin{proofwithcaption}{Proof sketch.}
        Two polyhedra $P,Q \in \Sigma_f$ serve as a certificate for non-injectivity, since checking whether there are $\mathbf{x} \in P$ and $\mathbf{x}' \in Q$ with $\mathbf{x} \neq \mathbf{x}'$ such that $f(\mathbf{x}) = f(\mathbf{x}')$ is simply checking feasibility of a linear program. For a rigorous proof, one can apply Theorem~3.3 by \cite{salzer2023}.
        \end{proofwithcaption}
	To show that \lReLUInj is \coNP-hard for any $\ell \in \N$, it suffices to show that deciding injectivity is already \coNP-hard for a single ReLU-layer. Hence, in the remainder of this section we study the decision problem \ReLUInj: Given a matrix $\mathbf{W} \in \R^{m\times d}$ and a vector $\mathbf{b} \in \R^m$, is the map $\phi_{\mathbf{W},\mathbf{b}}$ injective?
        
         \citet[Theorem~2]{PKLDH22} prove that a map $\phi_{\mathbf{W}}$ with~$\mathbf{W}\in\R^{m\times d}$ computed by a single ReLU-layer without bias (that is, $\mathbf{b}=\mathbf{0}$) is injective if and only if $\mathbf{W}_C$ has (full) rank~$d$ for all $C \in \mathcal{C}_{\mathbf{W}}$.
         Note that \cite[Lemma~3]{PKLDH22} state that one can assume~$\mathbf{b}=\mathbf{0}$ without loss of generality;
         more precisely, they claim that $\phi_{\mathbf{W},\mathbf{b}}$ is injective if and only if $\phi_{\mathbf{W}^+}$ is injective, where~$\mathbf W^+$ is obtained from~$\mathbf W$ by setting all rows with a negative bias to zero. 
         This argument, however, contains a logical inconsistency, since the map $x \mapsto [(x-1,-x+1)]_+$ is injective but $x\mapsto [(0,-x)]_+$ is not.
    Nevertheless, an analogous characterization can be shown for arbitrary bias.
	\begin{theorem}
		\label{theorem:characterization}
		A ReLU-layer $\phi_{\mathbf{W},\mathbf{b}}$ with~$\mathbf{W}\in\R^{m\times d}$ and $\mathbf{b}\in\R^m$ is injective if and only if $\mathbf{W}_C$ has rank~$d$ for all~$C\in\mathcal C_{\mathbf{W},\mathbf{b}}$.
	\end{theorem}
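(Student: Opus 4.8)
The plan is to prove both implications by contraposition. For the easy direction -- if some $\mathbf W_C$ with $C\in\mathcal C_\mathbf W$ has $\rank\mathbf W_C<d$ then $\phi_\mathbf W$ is not injective -- fix $\mathbf v\in\ker\mathbf W_C\setminus\{\mathbf 0\}$ and a point $\mathbf x$ in the interior of $C$ (nonempty since $\dim C=d$). For all sufficiently small $\varepsilon>0$ we have $\mathbf x+\varepsilon\mathbf v\in C$, and since $\phi_\mathbf W$ restricted to $C$ is the linear map $\mathbf x'\mapsto\mathbf W_C\mathbf x'$, we get $\phi_\mathbf W(\mathbf x+\varepsilon\mathbf v)=\mathbf W_C\mathbf x+\varepsilon\mathbf W_C\mathbf v=\mathbf W_C\mathbf x=\phi_\mathbf W(\mathbf x)$ while $\mathbf x+\varepsilon\mathbf v\neq\mathbf x$, so $\phi_\mathbf W$ is not injective.

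For the converse -- non-injectivity implies rank-deficiency of some $\mathbf W_C$ -- suppose $\phi_\mathbf W(\mathbf x)=\phi_\mathbf W(\mathbf y)$ for some $\mathbf x\neq\mathbf y$, and set $A\coloneqq\{i\in[m]\mid\mathbf w_i\mathbf x=\mathbf w_i\mathbf y\}$. If $A=[m]$ then $\mathbf W(\mathbf x-\mathbf y)=\mathbf 0$, hence $\mathbf W_C(\mathbf x-\mathbf y)=\mathbf 0$ for every $C\in\mathcal C_\mathbf W$ and we are done; so assume $A\subsetneq[m]$. Reading the identity $[\mathbf w_i\mathbf x]_+=[\mathbf w_i\mathbf y]_+$ coordinate by coordinate shows that for each $i\notin A$ we must have $\mathbf w_i\mathbf x\le 0$ and $\mathbf w_i\mathbf y\le 0$ (otherwise both positive parts would equal the positive value, contradicting $\mathbf w_i\mathbf x\neq\mathbf w_i\mathbf y$). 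For such $i$ these two nonpositive numbers are distinct, hence not both zero, so $\mathbf w_i\bigl((\mathbf x+\mathbf y)/2\bigr)<0$. Therefore the open polyhedral cone $R\coloneqq\{\mathbf z\in\R^d\mid\mathbf w_i\mathbf z<0\text{ for all }i\notin A\}$ is nonempty, as it contains the midpoint of $\mathbf x$ and $\mathbf y$.

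Finally, since $R$ is open and nonempty while $\bigcup_{i=1}^m H_{\mathbf w_i}$ has empty interior, we may pick $\mathbf z\in R$ with $\mathbf w_i\mathbf z\neq 0$ for all $i$. Setting $s_i\coloneqq{+}$ if $\mathbf w_i\mathbf z>0$ and $s_i\coloneqq{-}$ otherwise, the cone $C\coloneqq\bigcap_{i=1}^m H^{s_i}_{\mathbf w_i}$ is full-dimensional, lies in $\mathcal C_\mathbf W$, and contains $\mathbf z$ in its interior. For every $i\notin A$ we have $\mathbf w_i\mathbf z<0$ with $\mathbf z\in C$, so $i\notin I_C$; hence $I_C\subseteq A$. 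Consequently each row of $\mathbf W_C$ is either $\mathbf 0$ or equals $\mathbf w_i$ for some $i\in A$, so $\mathbf W_C(\mathbf x-\mathbf y)=\mathbf 0$, and since $\mathbf x-\mathbf y\neq\mathbf 0$ this gives $\rank\mathbf W_C<d$, completing the contrapositive. I expect the only genuinely delicate point to be that $\mathbf x$ and $\mathbf y$ themselves may lie on breakpoints rather than in cell interiors, so one cannot simply invoke ``$\phi_\mathbf W$ is linear and injective on each cell''; the midpoint observation together with the genericity perturbation is precisely what upgrades the situation to a full-dimensional cell $C$ with $I_C\subseteq A$.
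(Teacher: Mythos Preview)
The paper does not supply its own proof of this statement: \Cref{theorem:characterization} is quoted verbatim from \cite[Theorem~2]{PKLDH22} and used as a black box. There is therefore nothing in the paper to compare your argument against.

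That said, your proof is correct and self-contained. The forward (``rank-deficient $\Rightarrow$ non-injective'') direction is the standard local argument, and for the converse your key step---passing from the two colliding points $\mathbf x,\mathbf y$ to their midpoint, then perturbing to a generic $\mathbf z$ in the open region $R$ to land in the interior of a genuine full-dimensional cell $C$ with $I_C\subseteq A$---cleanly handles the subtlety you flag about $\mathbf x,\mathbf y$ possibly sitting on lower-dimensional faces. One cosmetic point: the sentence ``$\bigcup_{i=1}^m H_{\mathbf w_i}$ has empty interior'' tacitly assumes $\mathbf w_i\neq\mathbf 0$; if some $\mathbf w_i=\mathbf 0$ then $H_{\mathbf w_i}=\R^d$, but such an index automatically lies in $A$ and contributes a zero row to every $\mathbf W_C$, so it is harmless---you may simply restrict the genericity requirement to indices with $\mathbf w_i\neq\mathbf 0$.
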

    \begin{proof}
    If there is a $C\in\mathcal C_{\mathbf{W},\mathbf b}$ such that $\mathbf{W}_C$ has rank $k < d$, then there exists an $\mathbf{x}^\perp\neq \mathbf 0$ with $\mathbf{W}_C\cdot\mathbf x^\perp=\mathbf 0$.  Now, let $\mathbf{x}$ be in the interior of $C$. Then, there is a $\lambda > 0$ such that $\mathbf{x}+\lambda \mathbf{x}^\perp \in C$. Hence, it follows that $\phi_{\mathbf{W},\mathbf{b}}(\mathbf{x}+\lambda \mathbf{x}^\perp) = \mathbf{W}_C\cdot(\mathbf{x}+\lambda \mathbf{x}^\perp)+\mathbf{b}_C = \mathbf{W}_C\cdot\mathbf{x}+\mathbf{b}_C=\phi_{\mathbf{W},\mathbf{b}}(\mathbf{x}).$

    Conversely, if there are $\mathbf{x},\mathbf{y} \in \R^d$ such that $\phi_{\mathbf{W},\mathbf{b}}(\mathbf{x}) = \phi_{\mathbf{W},\mathbf{b}}(\mathbf{y})$, then in particular $\inner{\mathbf{w},\mathbf{x}}+\mathbf{b}$ and $\mathbf{W}\mathbf{y}+\mathbf{b}$ have the same sign in each coordinate. Thus, $\mathbf x$ and $\mathbf y$ are both contained in a cell $C \in \mathcal C_{\mathbf{W},\mathbf{b}}$. Hence, it holds that $\mathbf{W}_C\cdot\mathbf{x}+\mathbf{b}_C = \mathbf{W}_C\cdot\mathbf{y}+\mathbf{b}_C$. Since $\mathbf{W}_C$ has rank~$d$, it follows that $\mathbf{x}=\mathbf{y}$.
    \end{proof}
	
    Since any hyperplane arrangement with~$m$ hyperplanes in~$d$ dimensions defines at most~$O(m^d)$ (also clearly at most~$2^m$) cells (\cite{Zas75}), \Cref{theorem:characterization} implies an algorithm that solves \ReLUInj in~$O(\min\{2^m,m^d\}\cdot\poly(S))$ time, where~$S$ denotes the input size.
 
    We prove that \ReLUInj is \coNP-complete.
	To this end, we show \NP-completeness for the complement problem \textsc{ReLU-layer Non-Injectivity}: Given~$\mathbf{W}\in\R^{m\times d}$ and~$\mathbf{b}\in\R^m$, is there a cell~$C\in\mathcal{C}_{\mathbf{W},\mathbf{b}}$ with $\rank(\mathbf{W}_C)< d$.

	For the \NP-hardness, we reduce from the following directed graph (\emph{digraph}) problem. A digraph $D=(V,A)$ is called \emph{acyclic} if it does not contain directed cycles and \emph{weakly connected} if the underlying graph, i.e., the graph where we have an undirected edge for every arc, is connected. The \AcycDisc problem is, given a digraph $D=(V,A)$, to decide whether there is a subset~$A'\subseteq A$ of arcs such that~$(V,A')$ is acyclic and~$(V,A\setminus A')$ is not weakly connected.
    The problem is a special case of \textsc{Acyclic $s$-Disconnection} where the goal is to remove an acyclic arc set such that the remaining digraph contains at least~$s$ weakly connected components. This more general problem is known to be \NP-hard if~$s$ is part of the input (\cite{FHO17}). In Appendix~\ref{Proof:Acyclic_Np_hard} we prove \NP-hardness for our special case~$s=2$ (\Cref{thm:acyc-NP}).
    We remark that our reduction implies that \AcycDisc cannot be solved in~$2^{o(|D|)}$ time unless the Exponential Time Hypothesis\footnote{The Exponential Time Hypothesis asserts that \textsc{3-SAT} cannot be solved in $2^{o(n)}$ time where~$n$ is the number of Boolean variables in the input formula (\cite{IP01}).} fails (\Cref{cor:AD-lower}). 

    \begin{figure}
    \centering
\subfigure[For the ordering $3 \prec 1 \prec 2$ (on the left) and the ordering $3 \prec 2 \prec 1$ (on the right), the arcs that respect the ordering are dotted and colored in orange.]{\begin{tikzpicture}
            \node (0) at (0,0) [circle,draw]{$3$};
            \node (1) at (0,1.5) [circle,draw]{$1$};
            \node (2) at (1.5,1.5) [circle,draw]{$2$};
            \draw (1) edge [teal,->,very thick,bend left=15] (0);
            \draw (1) edge [teal,->,very thick,bend left=15] (0);
            \draw (0) edge [->,very thick,orange, dotted] (2); 
            \draw (0) edge [->,very thick,orange, dotted, bend left=15] (1);
            \draw (1) edge [->,very thick,orange, dotted] (2);
            \draw (1) edge [teal,->,very thick,bend left=15] (0);
            \end{tikzpicture}
            \hspace{0.7cm}
            \begin{tikzpicture}
            \node (0) at (0,0) [circle,draw]{$3$};
            \node (1) at (0,1.5) [circle,draw]{$1$};
            \node (2) at (1.5,1.5) [circle,draw]{$2$};
            \draw (1) edge [teal,->,very thick] (2);
            \draw (1) edge [teal,->,very thick,bend left=15] (0);
            \draw (0) edge [->,very thick,orange, dotted] (2); 
            \draw (0) edge [->,very thick,orange, dotted, bend left=15] (1);
            \end{tikzpicture}}
    \hspace{3em}
        \subfigure[The resulting ReLU network represented by the corresponding oriented hyperplane arrangements. For the gray colored cones, the inactive neurons are colored in orange.]{
        \begin{tikzpicture}[scale=0.9]
        \draw[thick] (-1,0) -- (1,0);
        \draw[thick] (0,-1) -- (0,1);
        \draw[thick] (-1,-1) -- (1,1);
        \fill[gray,opacity=0.3] (0,0) -- (0,1) -- (1,1) --cycle;
        \node (012) at (0.5,1.3)   {$0 \leq x_1 \leq x_2$};
        \draw[orange,thick,->] (0.5,0) -- (0.5,-0.2);
        \draw[orange,thick,->] (0,0.5) -- (-0.2,0.5);
        \draw[orange,thick,->] (0.5,0.5) -- (0.7,0.3);
        \draw[teal,thick,->] (0,0.5) -- (0.2,0.5);
        \end{tikzpicture}
               \begin{tikzpicture}[scale=0.9]
        \draw[thick] (-1,0) -- (1,0);
        \draw[thick] (0,-1) -- (0,1);
        \draw[thick] (-1,-1) -- (1,1);
        \fill[gray,opacity=0.3] (0,0) -- (1,0) -- (1,1) --cycle;
        \node (021) at (2.1,0.5)  {$0 \leq x_2 \leq x_1$};
        \draw[orange,thick,->] (0.5,0) -- (0.5,-0.2);
        \draw[orange,thick,->] (0,0.5) -- (-0.2,0.5);
        \draw[teal,thick,->] (0,0.5) -- (0.2,0.5);
        \draw[teal,thick,->] (0.5,0.5) -- (0.7,0.3);
            \end{tikzpicture}}
\caption{An illustration of the reduction from \textsc{Acyclic $2$-Disconnection} to \ReLUNonInj for a digraph with $n=3$ nodes. An ordering $\pi$ of the nodes induces an acyclic subset $A_\pi$ of arcs corresponding to inactive neurons on the cone $C_\pi=\{x_{\pi(1)} \leq x_{\pi(2)} \leq x_{\pi(3)}\}$. The active neurons on $C_\pi$ have full rank $2$ if and only if removing the arcs $A_\pi$ results in a weakly connected digraph.}
    \label{fig:reduction_acyc_to_inj}
    \end{figure}
	
    \begin{theorem}\label{thm:ReLUInj-NPhard}
		\ReLUNonInj is \NP-complete even if every row of $\mathbf{W}$ contains at most two non-zero entries and~$\mathbf{b}=\mathbf{0}$.
	\end{theorem}
  \begin{proofwithcaption}{Proof sketch.}
  Containment in \NP is easy: The set~$I_C$ of a cell~$C$ with~$\rank(\mathbf{W}_C)<d$ serves as a certificate.
  To prove \NP-hardness, we reduce from \AcycDisc which is \NP-hard by \Cref{thm:acyc-NP}. We sketch the proof here, a detailed proof can be found in Appendix~\ref{proof:ReLUInj-NPhard}. Moreover, the reduction is illustrated in \Cref{fig:reduction_acyc_to_inj}.
  Given a digraph $D=(V,A)$ with $V=[n]$, we define a ReLU-layer $\phi \colon \R^{n-1} \to \R^{|A|}$ by $\phi(\mathbf{x}) \coloneqq ([x_i-x_j]_+)_{(i,j) \in A}$, where we set~$x_n\coloneqq 0$.
  
  For a permutation~$\pi \in \mathcal{S}_{n}$, let $A_\pi \coloneqq \{(i,j) \in A \mid \pi^{-1}(i) < \pi^{-1}(j)\}$ be the (acylic) subset of arcs respecting the (total) order on the nodes induced by $\pi$ and let $\mathbf x\in\R^{n-1}$ be a point with $x_{\pi(1)} < \dots < x_{\pi(n)}$.
  Note that the arcs in $A \setminus A_\pi$ correspond exactly to the neurons that are active on the cell~$C$ containing $\mathbf x$, i.e., neurons computing $[x_i-x_j]_+$ where $\pi^{-1}(j) < \pi^{-1}(i)$ and hence $x_j \leq x_i$.
  In Appendix~\ref{proof:ReLUInj-NPhard}, we prove that
$D_\pi \coloneqq (V,A \setminus A_\pi)$  is weakly connected, that is,  there is a path in the underlying graph of $D_\pi$ from node $j$ to node $n$ for all $j \in [n-1]$, if and only if $\mathbf{W}_{C}$ has full rank $n-1$.
  Hence, it holds that $\phi$ is not injective if and only if there exists $A' \subseteq A$ such that~$(V,A')$ is acyclic and~$(V,A\setminus A')$ is not weakly connected, proving the correctness of the reduction.    
  \end{proofwithcaption}


  The lower bound from \Cref{cor:AD-lower} actually transfers to \ReLUInj since our polynomial-time reduction in the proof of \Cref{thm:ReLUInj-NPhard} yields a matrix where the number of rows/columns is linear in the number arcs/nodes of the digraph.

	\begin{corollary}\label{cor:Inj-coNP}\label{cor:ReLU-lower}
  \ReLUInj is \coNP-complete even if every row of $\mathbf{W}$ contains at most two non-zero entries (and~$\mathbf{b}=\mathbf{0}$).
 Moreover, \ReLUNonInj and \ReLUInj cannot be solved in~$2^{o(m+d)}$ time, unless the ETH fails.
	\end{corollary}

To prove \coNP-hardness for an arbitrary number~$\ell$ of layers, one can simply reduce \ReLUInj to \lReLUInj by concatenating the layer with a ReLU network with $\ell-1$ layers that computes the identity map. Hence, we obtain the following corollary. 
 \begin{corollary}
 \label{cor:injcoNPcomplete}
     For every $\ell \geq 2$, it holds that \lReLUInj is \coNP-complete.
 \end{corollary}
	
	\section{An FPT Algorithm for ReLU-Layer Injectivity}
    \label{sec:FPT-algorithm}

        The hardness results from the previous section exclude polynomial running times as well as running times subexponential in~$m + d$ for deciding injectivity of a single ReLU-layer.
        Nevertheless, we show in this section that the previous upper bound of~$m^d$ can be improved to~$(d+1)^d$.
        We achieve this with a branching algorithm which searches for a ``non-injective'' cell~$C$, that is, $\mathbf W_C$ has rank strictly less than~$d$. 
        The algorithm branches on the ReLUs active in~$C$, thus restricting the search space to some polyhedron.
        The pseudocode is given in \Cref{Layerinjectivity,FindCell}.
        
	\begin{algorithm2e}[b]
          \caption{LayerInjectivity}
          \label{Layerinjectivity}
          \LinesNumbered
		\SetKwInOut{KwIn}{Input}
		\SetKwInOut{KwOut}{Output}
		
		\KwIn{$W =\{\mathbf w_1,\ldots,\mathbf w_m\}\subseteq \R^{d}$, $B=\{b_1,\ldots,b_m\} \subseteq \R$}
		\KwOut{$\mathbf x\in \R^d$ with~$\rank(\{\mathbf w_i\in W\mid\mathbf w_i\mathbf x+b_i\ge 0\})< d$ (if it exists); otherwise, ``yes''}
		
		\lIf{$\rank(W)< d$,}
		{\Return $\mathbf{0}$}\label{trivial1}
		
		\lIf{$\exists \mathbf x\in\R^d: \forall i\in[m]:\mathbf w_i\mathbf x +b_i < 0$,}
		{\Return $\mathbf{x}$}\label{trivial2}
			$x \gets$ FindCell($\emptyset$, $\emptyset$, $W$, $B$)\;\label{find}
			\lIf{$x\neq$ \textnormal{``no''},}{\Return $x$}
		\Return ``yes''
	\end{algorithm2e}
    
	\begin{algorithm2e}[t]
		\caption{FindCell}
		\label{FindCell}
        \LinesNumbered
		\SetKwFunction{isOddNumber}{isOddNumber}
		\SetKwInOut{KwIn}{Input}
		\SetKwInOut{KwOut}{Output}
		
		\KwIn{$C=\{\mathbf c_1,\ldots,\mathbf c_n\}\subseteq\R^{d}$, $P=\{p_1,\ldots,p_n\}\subseteq\R$ and $M=\{\mathbf m_1,\ldots,\mathbf m_m\} \subseteq \R^{d}, B=\{b_1,\ldots,b_m\} \subseteq \R$,}
		\KwOut{$\mathbf{x} \in \R^d$ (if it exists) such that $\mathbf c_i\mathbf{x} +p_i\ge 0$, for all~$i\in[n]$, and
			$\rank(C\cup \{\mathbf m_i\mid \mathbf m_i\mathbf x \ge 0\})< d$;
			otherwise, ``no''}
		\lIf{$\rank(C)=d$,}{\Return ``no''}\label{checkNo}
		
		$I \gets \bigl\{i \in \{1,\ldots,m\} \mid \textbf{m}_i \notin \spn(C)\bigr\}$\;
		$M \gets \{\mathbf m_i\mid i\in I\}, B \gets \{ b_i\mid i\in I\}$ \;\label{filter}
		
		\lIf{$\exists \mathbf x\in\R^d: \bigl((\forall i\in[n]:\mathbf c_i\mathbf x +p_i \ge 0) \wedge (\forall i\in I:\mathbf m_i\mathbf x +b_i< 0)\bigr)$,}
		{\Return $\mathbf{x}$}\label{returnX}
		
		{compute $A \subseteq I$ such that $\{\mathbf x \mid \forall i\in[n]:\mathbf c_i\mathbf x+p_i\ge 0\} \subseteq \bigcup_{i \in A} H^+_{\mathbf{m}_i,b_i}$ and $|A| \le d+1$}\;\label{cover}
		
		\ForEach{$i\in A$}{
			$x\gets$ FindCell$\left(C\cup\{\mathbf{m}_i\},P \cup \{b_i\}, M\setminus \{\mathbf m_i\}, B \setminus \{b_i\}\right)$\;\label{recurse}
			\lIf{$x\neq$ \textnormal{``no'',}}{\Return $x$}
		}
		\Return ``no''
	\end{algorithm2e}

        The key idea to bound the running time of our search tree algorithm is to show that there are always at most~$d+1$ candidate ReLUs to set active. The candidates are those whose corresponding half-spaces cover the current search space; that is, at least one of them will be active in the sought cell (cf.~\Cref{cover} of \Cref{FindCell}).
        Helly's Theorem (\cite{Helly1923}) ensures that there are always at most~$d+1$ such candidates. Using duality of linear programming, we show that these half-spaces can be found in polynomial time. A more detailed proof of the following lemma is given in Appendix~\ref{app:lem:conecover}.
	\begin{lemma}\label{lem:conecover}
		Let $P \subseteq \R^d$ be a polyhedron and let $\{(\mathbf{w}_1,b_1),\ldots ,(\mathbf{w}_n,b_n)\} \subseteq \R^{d+1}$ be such that $C$ is covered by the corresponding half-spaces, that is, $C \subseteq \bigcup_{i=1}^n H_{\mathbf{w}_i,b_i}^+$. Then, there exists a subset $A \subseteq [n]$ of size at most $d+1$ computable in polynomial time, such that $C \subseteq \bigcup_{i \in A} H_{\mathbf{w}_i,b_i}^+$.
	\end{lemma}	
    As regards the correctness of our algorithm,
    assume that there exists a vector~$\mathbf{x} \in \R^d$ such that $\rank(\{\mathbf{w}_i \mid \inner{\mathbf{w}_i,\mathbf{x}} +b_i \geq 0\}) = k < d$. Then, $\mathbf{x}$ is contained in some search space of FindCell, that is,~$\mathbf c_i\mathbf x + p_i \geq 0$ for all~$i\in[n]$. Notice that every branching increases the number of linear independent neurons that are active in the current search space by one. Thus, after at most $k$ branchings, FindCell finds a point of a cell where the map is not injective. Conversely, if the algorithm outputs an $\mathbf{x} \in \R^d$, then $\mathbf x$ is contained in a cell with rank less than~$d$; otherwise, \Cref{checkNo} in FindCell would have output ``no''. 
    
    \Cref{lem:conecover} ensures that we can always branch on $d+1$ neurons. Furthermore, the number of active neurons increases after every branching by one. Hence the search tree has at most $(d+1)^d$ nodes. We conclude our findings in the following theorem for which we provide a detailed proof in Appendix~\ref{app:thm:fpt}.
	\begin{theorem}\label{thm:fpt}
		\ReLUInj is solvable in $O((d+1)^d\cdot \poly(S))$ time, where~$S$ denotes the input size.
	\end{theorem}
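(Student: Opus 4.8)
The plan is to assemble \Cref{thm:fpt} directly from the three lemmas already stated for Algorithm~\ref{Layerinjectivity}. First I would invoke \Cref{lem:correctness} to conclude that \textsc{LayerInjectivity}, on input $W=\{\mathbf w_1,\ldots,\mathbf w_m\}$ consisting of the rows of $\mathbf W$, returns ``yes'' exactly when $\rank(\mathbf W_C)=d$ for every $C\in\mathcal C_\mathbf W$, and otherwise returns a witness $\mathbf x$ with $\rank(\{\mathbf w_i\mid \mathbf w_i^T\mathbf x\ge 0\})<d$. By \Cref{theorem:characterization}, the former condition is precisely injectivity of $\phi_\mathbf W$, so the algorithm decides \ReLUInj. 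Second, since \citet[Lemma~3]{PKLDH22} lets us assume $\mathbf b=\mathbf 0$ without loss of generality, a generic instance $(\mathbf W,\mathbf b)$ of \ReLUInj reduces in polynomial time to the bias-free case, which is exactly what the algorithm handles; this preprocessing step costs only $\poly(S)$. Third, \Cref{lem:runtime} bounds the running time of \textsc{LayerInjectivity} by $O((d+1)^d\cdot\poly(S))$, and composing with the polynomial-time reduction to the bias-free instance preserves this bound. Putting these three facts together yields the theorem.

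There is essentially no new mathematical content to produce here: \Cref{thm:fpt} is a one-line corollary of \Cref{lem:correctness,lem:runtime} together with \Cref{theorem:characterization} and the reduction to $\mathbf b=\mathbf 0$. The only thing to be slightly careful about is the reduction step itself—making sure the transformation that eliminates the bias (lifting to one higher dimension and using homogeneity of ReLU, as in \cite{PKLDH22}) does not blow up the parameter $d$ beyond $O(d)$, so that $(d+1)^d$ remains of the same form; since that transformation increases the input dimension by at most one, $(d+2)^{d+1}$ is still $O((d+1)^d\cdot\poly(S))$ up to renaming, and the statement as written (in terms of the dimension of the given matrix) is unaffected because the reduction only changes the internal working dimension. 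If one prefers to avoid even this cosmetic issue, one can note that the algorithms in \Cref{Layerinjectivity,FindCell} can be run directly on the homogenized layer, treating $d$ as the ambient dimension of $\mathbf W$ after the (dimension-increasing-by-one) transformation, and absorb the constant shift into the $\poly(S)$ factor.

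Concretely, the proof I would write is: Given $(\mathbf W,\mathbf b)$, first apply the reduction of \citet[Lemma~3]{PKLDH22} to obtain an equivalent bias-free instance $\mathbf W'$ whose dimension differs from $d$ by at most a constant; this takes $\poly(S)$ time. Run \textsc{LayerInjectivity} on the rows of $\mathbf W'$. By \Cref{lem:correctness} and \Cref{theorem:characterization}, the output correctly decides whether $\phi_{\mathbf W',\mathbf 0}$—equivalently $\phi_{\mathbf W,\mathbf b}$—is injective. By \Cref{lem:runtime}, this takes $O((d+1)^d\cdot\poly(S))$ time. Hence \ReLUInj is solvable within the claimed bound. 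The main obstacle, if any, is purely bookkeeping: verifying that the bias-elimination and the passage from the matrix $\mathbf W$ to the vector set $W$ fed into the algorithm are faithful and polynomial, so that \Cref{lem:correctness,lem:runtime}—which are stated for the algorithm's input format—apply verbatim. No deeper argument is required, since Helly's theorem (via \Cref{lem:conecover}) and the correctness analysis have already been discharged in the preceding lemmas.

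Since fixed-parameter tractability with respect to $d$ follows immediately from the form $(d+1)^d\cdot\poly(S)$ of the running time (this is $f(d)\cdot S^{O(1)}$ with $f(d)=(d+1)^d$), \Cref{thm:fpt} also establishes that \ReLUInj is in \FPT parameterized by the input dimension, settling the open question of \cite{PKLDH22}.
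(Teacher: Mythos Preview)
Your proposal is correct and matches the paper's approach: the paper simply states \Cref{thm:fpt} as the conclusion of \Cref{lem:correctness} and \Cref{lem:runtime} without giving a separate proof. One minor remark: your worry about a dimension increase in the bias-elimination step is unfounded, since \citet[Lemma~3]{PKLDH22} shows that $\phi_{\mathbf W,\mathbf b}$ is injective if and only if $\phi_{\mathbf W,\mathbf 0}$ is injective for the \emph{same} matrix $\mathbf W$ (no homogenization or lifting is needed), so the parameter $d$ is unchanged and the bookkeeping you anticipate is trivial.
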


 Deciding injectivity for a deep neural network $f= \phi_\ell \circ \cdots \circ\phi_1$  is even more involved. Clearly, if all layer maps $\phi_i$ are injective, then also $f$ is injective. The converse does not hold, however, since~$\phi_i$ only needs to be injective on the image $(\phi_{i-1} \circ \cdots \circ \phi_1)(\R^d)$ for all $i \in [\ell]$. Hence, it is unclear whether the problem is still contained in \FPT when parameterized by $d$.
 
	\section{Verification and Surjectivity}
    \label{sec:surjectivity}
    In this section, we study the network verification task and prove hardness for a very general class of input sets. To this end, we call a sequence $S=(S_d)_{d\in \N}$ of subsets $S_d\subseteq \R^d$ \emph{reasonable}
    if there exists an algorithm that on input~$k\in\N$ computes in $\poly(k)$ time a $d \in \N,\mathbf{z} \in \R^d$ and  a $k$-dimensional affine space $A$ in $\R^d$ (given by a basis and a translation vector) such that there is a $k$-dimensional ball with center $\mathbf{z}$ contained in $S_d$ that affinely spans $A$.
 For a sequence $S=(S_d)_{d\in \N}$ of reasonable sets and a polyhedron $Q_{\mathbf{t}} \coloneqq \{\mathbf{x} \in \R^m \mid x_i \leq t_i$ for all $i \in [m]\}$ where $\mathbf{t} \in \R^m$, the task to decide for a given $\ell$-layer ReLU network $f\colon \R^d \to \R^m$ whether $f(S_d) \subseteq Q_{\mathbf{t}}$ we call \textsc{$\ell$-Layer ReLU $(S,Q_{\mathbf{t}})$-Verification}. 

 Even though the definition of a reasonable sequence is a bit involved, they include a wide range of set-sequences. For example, they trivially include all sequences of sets that contain an open neighborhood of the origin and therefore in particular all sequences of full-dimensional polyhedra containing the origin in the interior. This shows, in comparison to previous existing \coNP-hardness proofs, that the hardness of verification does not rely on the complexity of the polyhedron in the input domain but is intrinsic to the ReLU networks themselves.
   
      To show that the problem is \coNP-hard, it suffices to show hardness for one hidden layer and one-dimensional output and hence in the remaining part we study the following decision problem for an arbitrary $t \in \R$.

       \problemdef{2-Layer ReLU $(S,t)$-Verification}
		{{Matrices $\mathbf{W}_1 \in \R^{n\times d},\mathbf{W}_2 \in \R^{1 \times n}, \mathbf{b}_1 \in \R^{n}$, and $b_2 \in \R$.}}
		{Is $\mathbf{W}_{2}\cdot\phi_{\mathbf{W}_1,\mathbf{b}_1}(\mathbf{x})+b_2 \leq t$ for all $\mathbf{x} \in S_d$?}

      We will see in Theorem~\ref{prop:verification_hard} that the hardness of \textsc{2-Layer ReLU $(S,t)$-Verification} stems from the hardness of deciding whether a ReLU network computes a map that attains a positive output value. Thus, we will first show that the \ReLUPos problem is \NP-complete: Given Matrices $\mathbf{W}_1 \in \R^{n\times d},\mathbf{W}_2 \in \R^{1 \times n}$, is there an $\mathbf{x} \in \R^d$ such that $\mathbf{W}_2\cdot\phi_{\mathbf{W}_1}(\mathbf{x}) > 0$?

        Afterwards, we give a reduction from the complement of \ReLUPos to \textsc{2-Layer ReLU $(S,t)$-Verification} implying that \textsc{2-Layer ReLU $(S,t)$-Verification} and thus \textsc{$\ell$-Layer ReLU $(S,Q_\mathbf t)$-Verification} are \coNP-hard.

        \subsection{NP-Completeness of \ReLUPos}
        \label{section:surjectivity_NP_hard}
        
        We prove \NP-completeness of \ReLUPos via reduction from \textsc{Positive Cut}. For a graph $G=(V,E)$, we denote by $E(S,V\setminus S) \coloneqq \{\{u,v\} \in E \mid u \in S, v \notin S\}$ the edges in the cut induced by $S \subseteq V$. The \textsc{Positive Cut} problem is to decide for a given graph $G=(V,E)$ with edge weights $w\colon E \to \Z$ whether there is a subset $S \subseteq V$ such that $\sum_{e \in  E(S,V\setminus S)} w(e) > 0$, which we prove to be \NP-complete in Appendix~\ref{proof:poscut}. A detailed proof of the following theorem is given in Appendix~\ref{app:thm:Posi-NP}.

	\begin{theorem}\label{thm:Posi-NP}
		\ReLUPos is \NP-complete.
	\end{theorem}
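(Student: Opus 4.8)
The plan is to establish membership in \NP\ by a short polyhedral argument, and \NP-hardness by a reduction from \textsc{Positive Cut}, which is \NP-complete by \Cref{prop:poscutNPhard}.

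For membership in \NP, I would use a sign pattern $\mathbf s=(s_1,\dots,s_m)\in\{+,-\}^m$ as a certificate. Given $\mathbf s$, the set $R_{\mathbf s}\coloneqq\{\mathbf x\in\R^d\mid s_i\cdot(\mathbf W_1)_i\mathbf x\ge 0 \text{ for all } i\}$ is a polyhedral cone on which $\phi$ coincides with the linear functional $\mathbf x\mapsto\mathbf c_{\mathbf s}^\top\mathbf x$, where $\mathbf c_{\mathbf s}^\top\coloneqq\sum_{i\colon s_i=+}(\mathbf W_2)_i(\mathbf W_1)_i$. Clearly $\phi(\mathbf x)>0$ for some $\mathbf x\in\R^d$ if and only if $\mathbf c_{\mathbf s}^\top\mathbf x>0$ is feasible over $R_{\mathbf s}$ for some $\mathbf s$, and the latter is decidable in polynomial time by linear programming (maximize $\mathbf c_{\mathbf s}^\top\mathbf x$ over $R_{\mathbf s}\cap\{\mathbf c_{\mathbf s}^\top\mathbf x\le 1\}$ and test whether the optimum is positive). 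Hence a guessed certificate is verifiable in polynomial time.

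For \NP-hardness, given a \textsc{Positive Cut} instance $(G,w)$ with $G=(V,E)$, I would set $d\coloneqq|V|$ and, after fixing an arbitrary orientation of each edge, let $\mathbf W_1\in\R^{2|E|\times d}$ contain, for every edge $e=\{u,v\}$, the two rows $\mathbf e_u-\mathbf e_v$ and $\mathbf e_v-\mathbf e_u$, and let $\mathbf W_2\in\R^{1\times 2|E|}$ have both entries corresponding to $e$ equal to $w(e)$. Since $[a]_+ + [-a]_+=|a|$, the resulting network computes
\[\phi(\mathbf x)=\sum_{e=\{u,v\}\in E} w(e)\,\lvert\mathbf x_u-\mathbf x_v\rvert.\]
The heart of the reduction is a threshold (``layer‑cake'') identity: for $\mathbf x\in\R^V$, order the distinct coordinate values as $t_1<\dots<t_k$ and for $i\in[k-1]$ put $S_i\coloneqq\{v\in V\mid \mathbf x_v\ge t_{i+1}\}$; decomposing each term $\lvert\mathbf x_u-\mathbf x_v\rvert$ along the breakpoints $t_i$ yields
\[\phi(\mathbf x)=\sum_{i=1}^{k-1}(t_{i+1}-t_i)\,w\bigl(E(S_i,V\setminus S_i)\bigr).\]

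Both directions of correctness then follow from this identity. If $(G,w)$ admits a positive cut $S$, then $\mathbf x\coloneqq\mathbf 1_S$ gives $\phi(\mathbf x)=w(E(S,V\setminus S))>0$. Conversely, if $\phi(\mathbf x)>0$ for some $\mathbf x$, then since every coefficient $t_{i+1}-t_i$ is strictly positive, some $S_i$ must satisfy $w(E(S_i,V\setminus S_i))>0$, so $(G,w)$ is a yes-instance; note each $S_i$ is a legitimate subset of $V$ (and the trivial cuts contribute weight $0$). The reduction is plainly polynomial-time computable, which together with the membership argument proves \NP-completeness. The steps requiring the most care are the threshold identity relating $\phi$ to genuine graph cuts (making sure the intermediate "unassigned" vertices do not occur, which is exactly why thresholding the single value vector works cleanly) and the justification that the sign-pattern certificate combined with an LP feasibility check indeed certifies positivity; the remaining verifications are routine.
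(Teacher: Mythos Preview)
Your proposal is correct and follows essentially the same approach as the paper: the reduction from \textsc{Positive Cut} and the resulting network $\phi(\mathbf x)=\sum_{\{u,v\}\in E}w(\{u,v\})\lvert\mathbf x_u-\mathbf x_v\rvert$ are identical, and your layer-cake identity is precisely the conical decomposition the paper obtains by noting that $\phi$ is linear on each order cone $\{\mathbf x_{\pi(1)}\le\cdots\le\mathbf x_{\pi(n+1)}\}$ whose extreme rays are the threshold indicators $\mathbf r_S$. Your \NP-membership certificate (a sign pattern plus an LP check) differs only cosmetically from the paper's choice of $d-1$ row indices determining an extreme ray; both are valid and routine.
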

 
    \begin{proofwithcaption}{Proof sketch.}
           The problem is contained in NP since, by \Cref{observation:linearoncells},
          it is sufficient to identify an extreme ray~$\rho$ as a certificate for positivity.
		
		To prove \NP-hardness, we reduce from \textsc{Positive Cut}.  Given a weighted graph $(G=(V,E),w)$ with~$V=[n]$, we define a 2-layer ReLU neural network $f\colon\R^{n} \to\R$, where we have two hidden neurons for every $e \in E$ and the output weights of the neurons are the corresponding weights of the edges. More precisely,
		\[f(\mathbf x)\coloneqq\sum_{\{i,j\}\in E} w(\{i,j\}) \cdot ([x_i-x_j]_+ + [x_j-x_i]_+).\]
 The weights of the cuts are now stored as function values on certain vectors. For a subset $S \subseteq V$, let $\mathbf r_S \coloneqq  \sum_{i \in S} \mathbf e_i \in\R^{n}$ and $\mathbf{r}_S' \coloneqq - \sum_{i \in V\setminus S} \mathbf e_i \in\R^n$. It follows easily that \linebreak \mbox{$
  f(\mathbf r_S)=f(\mathbf{r}_S')=\sum\limits_{\{i,j\}\in E(S,V \setminus S)} w(\{i,j\}).
  $}
  We proceed by showing that every $\mathbf{x}\in\R^n$ is the conic combination of such $\mathbf{r}_S$ and $\mathbf{r}_S'$. Hence, there is an $\mathbf{x}$ with $f(\mathbf{x}) > 0$ if and only if there is an $S \subseteq V$ such that $f(\mathbf r_S)=\sum_{e \in  E(S,V\setminus S)} w(e) > 0$, proving the correctness of the reduction.
    \end{proofwithcaption}

\subsection{\coNP-hardness of Verification}
 We continue with the \coNP-hardness for network verification based on the \NP-hardness of \ReLUPos. 
 \begin{theorem}
        \label{prop:verification_hard}
		For every reasonable set-sequence~$S$ and $t \in \R$, it holds that \textsc{2-Layer ReLU $(S,t)$-Verification} is \coNP-hard. 
\end{theorem}
        \begin{proofwithcaption}{Proof sketch.}
        In Appendix~\ref{app:verification_hard}, we provide a reduction from the complement of \ReLUPos that we sketch here.
 Let $f \colon \R^k \to \R$ be the function computed by an instance of \ReLUPos and let the dimension $d$, the affine space $A$ and the point $\mathbf{z} \in \R^d$ be the output of the algorithm that exists due to the fact that $(S_d)_{d\in \N}$ is a reasonable sequence of sets. 
 Let~$T \colon \R^d \to \R^d$ be the affine map that maps $\mathbf{z}$ to the origin composed with the affine map that projects orthogonal to $A$ and afterwards maps isomorphic to $\R^k$. Then, we prove that $\mathbf{x} \mapsto (f \circ T)(\mathbf{x}) + t$ is the function computed by a ``yes''-instance of \textsc{2-Layer ReLU $(S,t)$-Verification} if and only if~$f$ has no positive output value.
        \end{proofwithcaption}


     \Cref{prop:verification_hard} also implies that $\max_{\mathbf x \in S_d} f(\mathbf x)$ cannot be constant-factor approximated.
    \begin{corollary}
    \label{cor:constapproximation_verification}
    For every reasonable set-sequence $S$, there is no $C >0$ such that there exists a polynomial-time algorithm that, given a 2-layer ReLU neural network $f\colon\R^d \to\R$, computes a value  $a(f) \in \R$ such that $a(f) \leq \max_{\mathbf x \in S_d} f(\mathbf x) \leq C\cdot a(f)$ unless P $=$ \NP.
    \end{corollary}
    \begin{proof}
       Assume that for some reasonable set-sequence $S$, there is a $C>0$ and such an algorithm. Then, we have that $a(f)>0$ if and only if there exists an $\mathbf x \in S_d$ such that $f(\mathbf x) >0$. Hence, by \Cref{prop:verification_hard} (choosing $t=0$), it follows that P $=$ \coNP and thus P $=$ \NP.
    \end{proof}
    \vspace{-0.5cm}
\subsection{Surjectivity}
              
    In this section, we study the task to decide whether a given ReLU network $f \colon \R^d \to \R$ with $\ell$ layers computes a surjective map. We will call this task \textsc{$\ell$-layer ReLU Surjectivity}.
    We start with some simple observations characterizing surjectivity; see Appendix~\ref{app:lemma:charsurjectivity} for a proof. 

        \begin{lemma}\label{lemma:charsurjectivity} For a ReLU network $f=\mathbf{W}_{\ell}\circ\phi_{\mathbf{W}_{\ell-1},\mathbf{b}_{\ell-1}} \circ \cdots \circ \phi_{\mathbf{W}_{1},\mathbf{b}_{1}}+b_{\ell}$ we denote by $f_0 \coloneqq \mathbf{W}_{\ell} \circ \phi_{\mathbf{W}_{\ell-1}} \circ \cdots \circ \phi_{\mathbf{W}_{1}}$ the corresponding ReLU network without biases. Then the following holds:
        \begin{enumerate}[a)]
        \item \label{lemma:char_surj_1} $f_0$ is surjective if and only if there are $\mathbf{v}^+,\mathbf{v}^- \in \R^d$ such that $f_0(\mathbf{v}^+)>0$ and  $f_0(\mathbf{v}^-)<0$.
        \item \label{lemma:char_surj_2}
        $f$ is surjective if and only if $f_0$ is surjective.
        \item \label{lemma:char_surj_3} The map $f$ is surjective if and only if there exist two ray generators $\mathbf{r}^+, \mathbf{r}^-$ of two rays $\rho^+,\rho^- \in \Sigma_{f_0}$ such that $f_0(\mathbf{r}^+) > 0$ and $f_0(\mathbf{r}^-) < 0$. 
          \end{enumerate}
        \end{lemma}
        \Cref{lemma:charsurjectivity}\,\ref{lemma:char_surj_2}) implies that we can assume that the ReLU network has no biases without loss of generality.
        Furthermore, \Cref{lemma:charsurjectivity}\,\ref{lemma:char_surj_3}) implies an exponential-time algorithm for \textsc{$\ell$-Layer ReLU Surjectivity}, since $\Sigma_{f_0}$ contains at most $O(\binom{(\prod_{i=1}^\ell n_i)^d}{d-1})$ many rays. In particular, for $\ell=2$, the fan~$\Sigma_{\mathbf W_1}$ contains at most~$\min \{2n_1^{d-1} ,2^{n_1+1}\}$ rays; that is, the problem is in \XP when parameterized by~$d$ and in \FPT when parameterized by~$n_1$. Moreover, since two ray generators $\mathbf{r}^+, \mathbf{r}^-$ of two rays $\rho^+,\rho^- \in \Sigma_{f_0}$ serve as a certificate, we obtain the following proposition.
    \begin{proposition}
      For every $\ell \in \N$, it holds that \textsc{$\ell$-layer ReLU Surjectivity} is in \NP .
    \end{proposition}
    Again, to show that the problem is \NP-hard, it suffices to consider one hidden layer and one-dimensional output.
    Hence, in the remainder we consider \textsc{2-Layer ReLU Surjectivity}, where we have one hidden layer with~$n$ ReLU-neurons with weights~$\mathbf W_1\in\R^{n\times d}$ and an output layer with one output neuron with weights~$\mathbf W_2\in\R^{1\times n}$ without activation. The network then computes the map $f\colon \R^d\to\R$ with $f(\mathbf x)\coloneqq\mathbf{W}_2\cdot\phi_{\mathbf{W}_1,\mathbf{b}}(\mathbf{x})$.

        In fact, to decide surjectivity, it is actually enough to find one (say the positive) ray generator since it is easy to find some point~$\mathbf x$ where~$f$ is non-zero (we argue below that w.l.o.g.~$f(\mathbf x)<0$). To find~$\mathbf x$, we choose an arbitrary $d$-dimensional cone~$C$ and determine whether~$\mathbf{W}_C$ is the zero map or not. In the former case, one can pick an arbitrary $d$-dimensional cone~$C'$ that shares a facet (i.e., a $(d-1)$-dimensional face) with~$C$. Then, $\mathbf{W}_{C'} \neq \mathbf{0}$ since the neuron defining the facet must be active in~$C'$. Hence, one finds a point~$\mathbf{x} \in C'$ with~$f(\mathbf{x}) \neq 0$. We give a more detailed proof of the following lemma in Appendix~\ref{app:lemma:polytime_zeromap}.
    \begin{lemma}
        \label[lemma]{lemma:polytime_zeromap}
        One can check in polynomial time whether $f = 0$, and otherwise find a point~$\mathbf{x}^* \in \R^d$ such that $f(\mathbf{x}^*) \neq 0$.
    \end{lemma}
        \Cref{lemma:polytime_zeromap} implies that \textsc{2-Layer ReLU Surjectivity} is polynomially equivalent to \ReLUPos (if~$f(\mathbf{x}^*)>0$, then replace~$\mathbf{W}_2$ with~$-\mathbf{W}_2$ such that~$f(\mathbf{x}^*)<0$) and hence we obtain the following corollary.
 \begin{corollary}\label{cor:Surj-hard}
 For every $\ell \geq 2$, it holds that \textsc{$\ell$-Layer ReLU Surjectivity} is \NP-complete.
 \end{corollary}
     Clearly, the above result implies \NP-hardness for the general case where the output dimension~$m$ is part of the input.
     It is unclear, however, whether containment in \NP also holds for larger output dimension.\footnote{We believe that the problem might be~$\Pi^\text{P}_2$-complete for~$m\ge 2$.}

    \subsection{Zonotope Formulation}
    
    We conclude this section with an alternative formulation of \ReLUPos based on a duality of positively homogeneous convex piecewise linear functions and polytopes.
    Interestingly, this yields a close connection to zonotope problems which arise in areas such as robotics and control.
        Let $\mathcal{F}_d$ be the set of convex piecewise linear functions from $\R^d$ to $\R$ and let $\mathcal{P}_d$ be the set of polytopes in~$\R^d$.
        For every $f  \in \mathcal{F}_d$, there are $\{\mathbf{a}_i \in \R^d\}_{i \in I}$ such that  $f(\mathbf x)=\max_{i \in I} \langle\mathbf a_i,\mathbf x\rangle$ and there is a bijection $\varphi \colon \mathcal{F}_d \to \mathcal{P}_d$ given by 
        $
        \varphi\bigl(\max_{i \in I} \langle\mathbf a_i,\mathbf x\rangle\bigr) = \conv{\{\mathbf a_i \mid i \in I\}}$
        where the inverse is the \emph{support function} \mbox{$\varphi^{-1} \colon \mathcal{P}_d \to  \mathcal{F}_d$}  given by 
        $
        \varphi^{-1}(P)(\mathbf x)  = \max \bigl\{ \langle \mathbf x,\mathbf y \rangle \mid \mathbf y \in P \bigr\}.
        $
        Furthermore, $\varphi$ is a semi-ring isomorphism between the semi-rings $(\mathcal{F}_d, \max, +)$ and $(\mathcal{P}_d, \conv,+)$, where $+$ is either the pointwise addition or the Minkowski sum, respectively.\footnote{See, e.g., \cite{pmlr-v80-zhang18i} for more details on this correspondence.}
       
        A \emph{zonotope} is a Minkowski sum of line segments, i.e., given a \emph{generator} matrix $\mathbf{G} \in \R^{n\times d}$, the corresponding zonotope is given by 
        $
        Z(\mathbf{G})\coloneqq \Bigl\{\mathbf{x} \in \R^d \mid \mathbf x \in \sum\limits_{i \in [n]} \conv\{\mathbf 0,\mathbf{g}_i\}\Bigr\}.
        $
        
        \textsc{Zonotope Containment} is the question, given two matrices~$\mathbf G_1\in\R^{n\times d}$, $\mathbf G_2\in\R^{m\times d}$, whether $Z(\mathbf G_1) \subseteq Z(\mathbf G_2)$.
        Using $\varphi$, we prove that \ReLUPos is equivalent to the complement of \textsc{Zonotope Containment} in Appendix~\ref{app:zonotope_containment}.
       	\begin{proposition}
        \label{prop:zonotope_containment}
       	    \ReLUPos is equivalent to \textsc{Zonotope (Non)Containment}.
       	\end{proposition}

       Notably, \citet{KULMBURG202184} already showed that \textsc{Zonotope Containment} is \coNP-hard, which implies our \NP-hardness of \ReLUPos (\Cref{thm:Posi-NP}).
       Nevertheless, we believe that our reduction is more accessible and direct and provides a different perspective on the computational hardness.
       The question whether \textsc{Zonotope Containment} is fixed-parameter tractable with respect to~$d$ is, to the best of our knowledge, open.

	\section{Conclusion}
 \label{section:conclusion}
        We showed the strongest hardness result for network verification known so far and thereby excluded polynomial-time algorithms (in the worst case) for almost all restrictions on the input set. Moreover, we initiate the complexity-theoretic study of deciding the two elementary properties injectivity and surjectivity for functions computed by ReLU networks.
        We exclude polynomial-time algorithms for solving both problems, and prove fixed-parameter tractability for injectivity of a single layer.
        It turned out that surjectivity is a special case of network verification and is also equivalent to zonotope containment.
        Our results build new bridges between seemingly unrelated areas, and yield new insights into the complexity and expressiveness of ReLU neural networks.
        We close with some open questions:
        \begin{compactitem}
        \item[--] Can the running time for \ReLUInj be improved? Or is it possible to prove a lower bound of~$2^{\Omega(d\log d)}$?
        \item[--] What is the (parameterized) complexity of deciding injectivity for a ReLU neural network with two hidden layers? 
          \item[--] Is \textsc{2-Layer ReLU Surjectivity} (or equivalently \textsc{Zonotope Containment}) fixed-parameter tractable with respect to the input dimension~$d$?
          \item[--] How can surjectivity be characterized for output dimension~$m\ge 2$? What is the complexity of the decision problem?
          \item[--] What is the complexity of bijectivity for 2-layer ReLU networks?
        \end{compactitem}
	

\bibliography{ref}

\begin{thebibliography}{43}
\providecommand{\natexlab}[1]{#1}
\providecommand{\url}[1]{\texttt{#1}}
\expandafter\ifx\csname urlstyle\endcsname\relax
  \providecommand{\doi}[1]{doi: #1}\else
  \providecommand{\doi}{doi: \begingroup \urlstyle{rm}\Url}\fi

\bibitem[Ardizzone et~al.(2019)Ardizzone, Kruse, Rother, and
  Köthe]{ardizzone2018analyzing}
Lynton Ardizzone, Jakob Kruse, Carsten Rother, and Ullrich Köthe.
\newblock Analyzing inverse problems with invertible neural networks.
\newblock In \emph{International Conference on Learning Representations}, 2019.
\newblock URL \url{https://openreview.net/forum?id=rJed6j0cKX}.

\bibitem[Arora et~al.(2018)Arora, Basu, Mianjy, and Mukherjee]{AroraBMM18}
Raman Arora, Amitabh Basu, Poorya Mianjy, and Anirbit Mukherjee.
\newblock Understanding deep neural networks with rectified linear units.
\newblock In \emph{6th International Conference on Learning Representations,
  {ICLR} 2018, Vancouver, BC, Canada, April 30 - May 3, 2018, Conference Track
  Proceedings}. OpenReview.net, 2018.
\newblock URL \url{https://openreview.net/forum?id=B1J\_rgWRW}.

\bibitem[Baader et~al.(2020)Baader, Mirman, and Vechev]{Baader2020Universal}
Maximilian Baader, Matthew Mirman, and Martin Vechev.
\newblock Universal approximation with certified networks.
\newblock In \emph{International Conference on Learning Representations}, 2020.
\newblock URL \url{https://openreview.net/forum?id=B1gX8kBtPr}.

\bibitem[Baader et~al.(2024)Baader, Mueller, Mao, and
  Vechev]{baader2024expressivity}
Maximilian Baader, Mark~Niklas Mueller, Yuhao Mao, and Martin Vechev.
\newblock Expressivity of re{LU}-networks under convex relaxations.
\newblock In \emph{The Twelfth International Conference on Learning
  Representations}, 2024.
\newblock URL \url{https://openreview.net/forum?id=awHTL3Hpto}.

\bibitem[Bojarski et~al.(2016)Bojarski, Testa, Dworakowski, Firner, Flepp,
  Goyal, Jackel, Monfort, Muller, Zhang, Zhang, Zhao, and
  Zieba]{bojarski2016endendlearningselfdriving}
Mariusz Bojarski, Davide~Del Testa, Daniel Dworakowski, Bernhard Firner, Beat
  Flepp, Prasoon Goyal, Lawrence~D. Jackel, Mathew Monfort, Urs Muller, Jiakai
  Zhang, Xin Zhang, Jake Zhao, and Karol Zieba.
\newblock End to end learning for self-driving cars, 2016.
\newblock URL \url{https://arxiv.org/abs/1604.07316}.

\bibitem[Bonsma et~al.(2010)Bonsma, Broersma, Patel, and Pyatkin]{Bonsma10}
Paul~S. Bonsma, Hajo Broersma, Viresh Patel, and Artem~V. Pyatkin.
\newblock The complexity status of problems related to sparsest cuts.
\newblock In \emph{21st International Workshop on Combinatorial Algorithms
  (IWOCA~'10)}, volume 6460 of \emph{LNCS}, pages 125--135. Springer, 2010.

\bibitem[Downey and Fellows(2013)]{DF13}
Rodney~G. Downey and Michael~R. Fellows.
\newblock \emph{Fundamentals of Parameterized Complexity}.
\newblock Springer, 2013.

\bibitem[Ferrari et~al.(2022)Ferrari, Mueller, Jovanovi{\'c}, and
  Vechev]{ferrari2022complete}
Claudio Ferrari, Mark~Niklas Mueller, Nikola Jovanovi{\'c}, and Martin Vechev.
\newblock Complete verification via multi-neuron relaxation guided
  branch-and-bound.
\newblock In \emph{International Conference on Learning Representations}, 2022.
\newblock URL \url{https://openreview.net/forum?id=l_amHf1oaK}.

\bibitem[Figueroa et~al.(2017)Figueroa, Hernández-Cruz, and Olsen]{FHO17}
Ana~Paulina Figueroa, César Hernández-Cruz, and Mika Olsen.
\newblock The minimum feedback arc set problem and the acyclic disconnection
  for graphs.
\newblock \emph{Discrete Mathematics}, 340\penalty0 (7):\penalty0 1514--1521,
  2017.

\bibitem[Froese and Hertrich(2023)]{FH23}
Vincent Froese and Christoph Hertrich.
\newblock Training neural networks is {NP}-hard in fixed dimension.
\newblock In \emph{Advances in Neural Information Processing Systems 36: Annual
  Conference on Neural Information Processing Systems~(NeurIPS '23)}, pages
  44039--44049, 2023.

\bibitem[Froese et~al.(2022)Froese, Hertrich, and Niedermeier]{FHN22}
Vincent Froese, Christoph Hertrich, and Rolf Niedermeier.
\newblock The computational complexity of {ReLU} network training parameterized
  by data dimensionality.
\newblock \emph{Journal of Artificial Intelligence Research}, 74:\penalty0
  1775--1790, 2022.

\bibitem[Furuya et~al.(2023)Furuya, Puthawala, Lassas, and de~Hoop]{FPLH23}
Takashi Furuya, Michael Puthawala, Matti Lassas, and Maarten~V. de~Hoop.
\newblock Globally injective and bijective neural operators.
\newblock In \emph{Advances in Neural Information Processing Systems 36: Annual
  Conference on Neural Information Processing Systems~(NeurIPS~'23)}, pages
  57713--57753, 2023.

\bibitem[Gomez et~al.(2017)Gomez, Ren, Urtasun, and Grosse]{Gomez}
Aidan~N Gomez, Mengye Ren, Raquel Urtasun, and Roger~B Grosse.
\newblock The reversible residual network: Backpropagation without storing
  activations.
\newblock In I.~Guyon, U.~Von Luxburg, S.~Bengio, H.~Wallach, R.~Fergus,
  S.~Vishwanathan, and R.~Garnett, editors, \emph{Advances in Neural
  Information Processing Systems}, volume~30. Curran Associates, Inc., 2017.
\newblock URL
  \url{https://proceedings.neurips.cc/paper_files/paper/2017/file/f9be311e65d81a9ad8150a60844bb94c-Paper.pdf}.

\bibitem[Gowal et~al.(2018)Gowal, Dvijotham, Stanforth, Bunel, Qin, Uesato,
  Arandjelovi{\'c}, Mann, and Kohli]{Gowal2018OnTE}
Sven Gowal, Krishnamurthy Dvijotham, Robert Stanforth, Rudy Bunel, Chongli Qin,
  Jonathan Uesato, Relja Arandjelovi{\'c}, Timothy~A. Mann, and Pushmeet Kohli.
\newblock On the effectiveness of interval bound propagation for training
  verifiably robust models.
\newblock \emph{ArXiv}, abs/1810.12715, 2018.
\newblock URL \url{https://api.semanticscholar.org/CorpusID:53112003}.

\bibitem[Grigsby and Lindsey(2022)]{grigsby}
J.~Elisenda Grigsby and Kathryn Lindsey.
\newblock On transversality of bent hyperplane arrangements and the topological
  expressiveness of relu neural networks.
\newblock \emph{SIAM Journal on Applied Algebra and Geometry}, 6\penalty0
  (2):\penalty0 216–242, 2022.
\newblock \doi{10.1137/20M1368902}.
\newblock URL \url{https://doi.org/10.1137/20M1368902}.

\bibitem[Haider et~al.(2023)Haider, Ehler, and Bal{\'{a}}zs]{HEB23}
Daniel Haider, Martin Ehler, and P{\'{e}}ter Bal{\'{a}}zs.
\newblock Convex geometry of {ReLU}-layers, injectivity on the ball and local
  reconstruction.
\newblock In \emph{Proceedings of the 40th International Conference on Machine
  Learning (ICML~'23)}, volume 202 of \emph{PMLR}, pages 12339--12350, 2023.

\bibitem[Hanin and Rolnick(2019)]{HaninR19}
Boris Hanin and David Rolnick.
\newblock Deep relu networks have surprisingly few activation patterns.
\newblock In Hanna~M. Wallach, Hugo Larochelle, Alina Beygelzimer, Florence
  d'Alch{\'{e}}{-}Buc, Emily~B. Fox, and Roman Garnett, editors, \emph{Advances
  in Neural Information Processing Systems 32: Annual Conference on Neural
  Information Processing Systems 2019, NeurIPS 2019, December 8-14, 2019,
  Vancouver, BC, Canada}, pages 359--368, 2019.
\newblock URL
  \url{https://proceedings.neurips.cc/paper/2019/hash/9766527f2b5d3e95d4a733fcfb77bd7e-Abstract.html}.

\bibitem[Helly(1923)]{Helly1923}
Eduard Helly.
\newblock Über {M}engen konvexer {K}örper mit gemeinschaftlichen {P}unkten.
\newblock \emph{Jahresbericht der Deutschen Mathematiker-Vereinigung},
  32:\penalty0 175--176, 1923.

\bibitem[Hertrich et~al.(2021)Hertrich, Basu, {Di Summa}, and Skutella]{HBDS21}
Christoph Hertrich, Amitabh Basu, Marco {Di Summa}, and Martin Skutella.
\newblock Towards lower bounds on the depth of relu neural networks.
\newblock In \emph{Advances in Neural Information Processing Systems 34: Annual
  Conference on Neural Information Processing Systems (NeurIPS'21)}, pages
  3336--3348, 2021.

\bibitem[Impagliazzo and Paturi(2001)]{IP01}
Russell Impagliazzo and Ramamohan Paturi.
\newblock On the complexity of {$k$-SAT}.
\newblock \emph{Journal of Computer and System Sciences}, 62\penalty0
  (2):\penalty0 367--375, 2001.

\bibitem[Julian et~al.(2019)Julian, Kochenderfer, and Owen]{aircraft}
Kyle~D. Julian, Mykel~J. Kochenderfer, and Michael~P. Owen.
\newblock Deep neural network compression for aircraft collision avoidance
  systems.
\newblock \emph{Journal of Guidance, Control, and Dynamics}, 42\penalty0
  (3):\penalty0 598–608, 2019.
\newblock \doi{10.2514/1.G003724}.
\newblock URL \url{https://doi.org/10.2514/1.G003724}.

\bibitem[Katz et~al.(2022)Katz, Barrett, Dill, Julian, and
  Kochenderfer]{KBDJK22}
Guy Katz, Clark~W. Barrett, David~L. Dill, Kyle Julian, and Mykel~J.
  Kochenderfer.
\newblock Reluplex: a calculus for reasoning about deep neural networks.
\newblock \emph{Formal Methods in System Design}, 60\penalty0 (1):\penalty0
  87--116, 2022.

\bibitem[Kingma and Dhariwal(2018)]{Kingma}
Durk~P Kingma and Prafulla Dhariwal.
\newblock Glow: Generative flow with invertible 1x1 convolutions.
\newblock In S.~Bengio, H.~Wallach, H.~Larochelle, K.~Grauman, N.~Cesa-Bianchi,
  and R.~Garnett, editors, \emph{Advances in Neural Information Processing
  Systems}, volume~31. Curran Associates, Inc., 2018.
\newblock URL
  \url{https://proceedings.neurips.cc/paper_files/paper/2018/file/d139db6a236200b21cc7f752979132d0-Paper.pdf}.

\bibitem[Kouvaros and Lomuscio(2021)]{KL21}
Panagiotis Kouvaros and Alessio Lomuscio.
\newblock Towards scalable complete verification of {ReLU} neural networks via
  dependency-based branching.
\newblock In \emph{Proceedings of the Thirtieth International Joint Conference
  on Artificial Intelligence (IJCAI~'21)}, pages 2643--2650, 2021.

\bibitem[Kratsch and Le(2016)]{KL16}
Dieter Kratsch and Van~Bang Le.
\newblock Algorithms solving the matching cut problem.
\newblock \emph{Theoretical Computer Science}, 609:\penalty0 328--335, 2016.

\bibitem[Kulmburg and Althoff(2021)]{KULMBURG202184}
Adrian Kulmburg and Matthias Althoff.
\newblock On the {co-NP}-completeness of the zonotope containment problem.
\newblock \emph{European Journal of Control}, 62:\penalty0 84--91, 2021.

\bibitem[Lov{\'{a}}sz(1973)]{Lovasz73}
L{\'{a}}szl{\'{o}} Lov{\'{a}}sz.
\newblock Coverings and colorings of hypergraphs.
\newblock In \emph{Proceedings of the 4th Southeastern Conference of
  Combinatorics, Graph Theory, and Computing}, pages 3--12, 1973.

\bibitem[Mao et~al.(2024)Mao, Müller, Fischer, and
  Vechev]{mao2024understandingcertifiedtraininginterval}
Yuhao Mao, Mark~Niklas Müller, Marc Fischer, and Martin Vechev.
\newblock Understanding certified training with interval bound propagation,
  2024.
\newblock URL \url{https://arxiv.org/abs/2306.10426}.

\bibitem[Mirman et~al.(2022)Mirman, Baader, and Vechev]{mirman2022the}
Matthew~B Mirman, Maximilian Baader, and Martin Vechev.
\newblock The fundamental limits of neural networks for interval certified
  robustness.
\newblock \emph{Transactions on Machine Learning Research}, 2022.
\newblock ISSN 2835-8856.
\newblock URL \url{https://openreview.net/forum?id=fsacLLU35V}.

\bibitem[Puthawala et~al.(2022)Puthawala, Kothari, Lassas, Dokmanic, and
  de~Hoop]{PKLDH22}
Michael Puthawala, Konik Kothari, Matti Lassas, Ivan Dokmanic, and Maarten~V.
  de~Hoop.
\newblock Globally injective {ReLU} networks.
\newblock \emph{Journal of Machine Learning Research}, 23:\penalty0
  105:1--105:55, 2022.

\bibitem[Putzky and Welling(2019)]{InverttoInvert}
Patrick Putzky and Max Welling.
\newblock Invert to learn to invert.
\newblock In H.~Wallach, H.~Larochelle, A.~Beygelzimer, F.~d\textquotesingle
  Alché-Buc, E.~Fox, and R.~Garnett, editors, \emph{Advances in Neural
  Information Processing Systems}, volume~32. Curran Associates, Inc., 2019.
\newblock URL
  \url{https://proceedings.neurips.cc/paper_files/paper/2019/file/ac1dd209cbcc5e5d1c6e28598e8cbbe8-Paper.pdf}.

\bibitem[R{\"{o}}ssig and Petkovic(2021)]{RP21}
Ansgar R{\"{o}}ssig and Milena Petkovic.
\newblock Advances in verification of {ReLU} neural networks.
\newblock \emph{Journal of Global Optimization}, 81\penalty0 (1):\penalty0
  109--152, 2021.

\bibitem[S{\"{a}}lzer and Lange(2023)]{salzer2023}
Marco S{\"{a}}lzer and Martin Lange.
\newblock Reachability in simple neural networks, 2023.
\newblock URL \url{https://arxiv.org/abs/2203.07941}.

\bibitem[Schrijver(1986)]{ilp_theory}
A.~Schrijver.
\newblock \emph{{Theory of Linear and Integer programming}}.
\newblock Wiley-Interscience, 1986.

\bibitem[Singh et~al.(2019)Singh, Gehr, P\"{u}schel, and Vechev]{Singh}
Gagandeep Singh, Timon Gehr, Markus P\"{u}schel, and Martin Vechev.
\newblock An abstract domain for certifying neural networks.
\newblock \emph{Proc. ACM Program. Lang.}, 3\penalty0 (POPL), January 2019.
\newblock \doi{10.1145/3290354}.
\newblock URL \url{https://doi.org/10.1145/3290354}.

\bibitem[Szegedy et~al.(2014)Szegedy, Zaremba, Sutskever, Bruna, Erhan,
  Goodfellow, and Fergus]{SzegedyZSBEGF13}
Christian Szegedy, Wojciech Zaremba, Ilya Sutskever, Joan Bruna, Dumitru Erhan,
  Ian~J. Goodfellow, and Rob Fergus.
\newblock Intriguing properties of neural networks.
\newblock In Yoshua Bengio and Yann LeCun, editors, \emph{2nd International
  Conference on Learning Representations, {ICLR} 2014, Banff, AB, Canada, April
  14-16, 2014, Conference Track Proceedings}, 2014.
\newblock URL \url{http://arxiv.org/abs/1312.6199}.

\bibitem[Wang et~al.(2022)Wang, Albarghouthi, Prakriya, and Jha]{Zi}
Zi~Wang, Aws Albarghouthi, Gautam Prakriya, and Somesh Jha.
\newblock Interval universal approximation for neural networks.
\newblock \emph{Proc. ACM Program. Lang.}, 6\penalty0 (POPL), January 2022.
\newblock \doi{10.1145/3498675}.
\newblock URL \url{https://doi.org/10.1145/3498675}.

\bibitem[Weng et~al.(2018)Weng, Zhang, Chen, Song, Hsieh, Daniel, Boning, and
  Dhillon]{Weng18}
Tsui{-}Wei Weng, Huan Zhang, Hongge Chen, Zhao Song, Cho{-}Jui Hsieh, Luca
  Daniel, Duane~S. Boning, and Inderjit~S. Dhillon.
\newblock Towards fast computation of certified robustness for {ReLU} networks.
\newblock In \emph{Proceedings of the 35th International Conference on Machine
  Learning (ICML'18)}, volume~80 of \emph{{PMLR}}, pages 5273--5282, 2018.

\bibitem[Wong et~al.(2018)Wong, Schmidt, Metzen, and Kolter]{Wong}
Eric Wong, Frank~R. Schmidt, Jan~Hendrik Metzen, and J.~Zico Kolter.
\newblock Scaling provable adversarial defenses.
\newblock In \emph{Proceedings of the 32nd International Conference on Neural
  Information Processing Systems}, NIPS'18, page 8410–8419, Red Hook, NY,
  USA, 2018. Curran Associates Inc.

\bibitem[Xu et~al.(2020)Xu, Shi, Zhang, Wang, Chang, Huang, Kailkhura, Lin, and
  Hsieh]{Library1}
Kaidi Xu, Zhouxing Shi, Huan Zhang, Yihan Wang, Kai-Wei Chang, Minlie Huang,
  Bhavya Kailkhura, Xue Lin, and Cho-Jui Hsieh.
\newblock Automatic perturbation analysis for scalable certified robustness and
  beyond.
\newblock Red Hook, NY, USA, 2020. Curran Associates Inc.
\newblock ISBN 9781713829546.

\bibitem[Zaslavsky(1975)]{Zas75}
Thomas Zaslavsky.
\newblock Facing up to arrangements: face-count formulas for partitions of
  space by hyperplanes.
\newblock \emph{Memoirs of the American Mathematical Society}, 1\penalty0
  (154), 1975.

\bibitem[Zhang et~al.(2024)Zhang, Wang, Xu, Li, Li, Jana, Hsieh, and
  Kolter]{Cuttingplane}
Huan Zhang, Shiqi Wang, Kaidi Xu, Linyi Li, Bo~Li, Suman Jana, Cho-Jui Hsieh,
  and J.~Zico Kolter.
\newblock General cutting planes for bound-propagation-based neural network
  verification.
\newblock In \emph{Proceedings of the 36th International Conference on Neural
  Information Processing Systems}, NIPS '22, Red Hook, NY, USA, 2024. Curran
  Associates Inc.
\newblock ISBN 9781713871088.

\bibitem[Zhang et~al.(2018)Zhang, Naitzat, and Lim]{pmlr-v80-zhang18i}
Liwen Zhang, Gregory Naitzat, and Lek-Heng Lim.
\newblock Tropical geometry of deep neural networks.
\newblock In \emph{Proceedings of the 35th International Conference on Machine
  Learning (ICML'18)}, volume~80 of \emph{PMLR}, pages 5824--5832, 2018.

\end{thebibliography}
\newpage
\appendix

\section{Appendix}

 \subsection{Appendix to Section 3}
 \label{sec:appendix}
 
 To prove that \AcycDisc is \NP-hard, we reduce from the following problem.
   \problemdef{3-Uniform Hypergraph 2-Coloring}
	{A 3-uniform hypergraph~$H=(V,E)$, that is, $|e|=3$ for all~$e\in E$.}
	{Is there a 2-coloring of the nodes~$V$ such that no hyperedge is monochromatic?}
 \begin{theorem}\label{thm:acyc-NP}
		\AcycDisc is \NP-hard.
	\end{theorem}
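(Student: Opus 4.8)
The plan is to reduce from a known $\NP$-hard problem to \AcycDisc. The more general \textsc{Acyclic $s$-Disconnection} is $\NP$-hard when $s$ is part of the input (\cite{FHO17}), but that reduction presumably produces instances where the optimal number of components is large, so it does not immediately give hardness for fixed $s=2$. Instead, I would look for a direct reduction from a classical $\NP$-hard problem whose structure naturally forces exactly a \emph{bipartition} of the vertex set. The canonical choice is \textsc{Feedback Arc Set} (FAS): given a digraph $D$ and an integer $k$, is there an arc set of size at most $k$ whose removal makes $D$ acyclic? Alternatively, one could reduce from a min-cut-type problem with an acyclicity side constraint, or from \textsc{Max-Cut}/\textsc{NAE-SAT}; I expect FAS or a direct SAT-style reduction to be cleanest.

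The key steps would be: (1) Start from an instance of the source problem. (2) Build a digraph $D'$ in which some designated \emph{gadget} region has the property that a subset $A'$ with $(V,A')$ acyclic and $(V, A \setminus A')$ disconnected exists if and only if the source instance is a yes-instance. The natural idea: add two ``anchor'' vertices (or two cliques of bidirected arcs) $P$ and $Q$ that must end up in different weak components; every arc between $P$-side and $Q$-side must therefore be deleted, i.e.\ put into $A'$; and arrange the bidirected crossing arcs so that deleting all of them still leaves a directed cycle unless the source instance is solved. Concretely, one can encode each clause / each constraint of the source problem as a small directed cycle that passes through the $P$--$Q$ cut exactly once in each direction (forming a $2$-cycle with the cut), so that breaking the cut to disconnect $P$ from $Q$ forces orienting choices, and acyclicity of the removed set $A'$ together with acyclicity being violated only when the choices are inconsistent. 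The bidirected anchor arcs inside $P$ and inside $Q$ cannot be placed in $A'$ (they form $2$-cycles, which are not acyclic), so $P$ and $Q$ each stay internally connected; hence ``not weakly connected'' is equivalent to ``$P$ and $Q$ lie in distinct components,'' which is equivalent to ``every $P$--$Q$ arc is in $A'$,'' pinning down the cut.

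The main obstacle will be step (2): simultaneously enforcing (a) that the deleted arc set $A'$ is \emph{acyclic} (which limits how many crossing arcs, and which combinations, we may delete), and (b) that the non-deleted arcs are \emph{disconnected into at least two pieces} — these two requirements pull in opposite directions, and making them jointly satisfiable exactly on yes-instances is the crux. I would control this by ensuring all arcs that are \emph{forced} into $A'$ (the $P$--$Q$ crossing arcs) already form an acyclic set on their own, and that the remaining freedom in choosing $A'$ corresponds precisely to the source instance's solution space, with any ``bad'' choice creating a directed cycle among the deleted arcs. After the construction, the correctness proof splits into the two routine directions: a yes-instance of the source problem yields a valid $A'$ (exhibit it explicitly and check acyclicity and disconnectedness), and conversely a valid $A'$ must delete exactly the crossing arcs plus a consistent selection, which decodes to a solution of the source instance. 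Finally, verify the reduction is polynomial-time. I would defer the full construction and verification to Appendix~\ref{Proof:Acyclic_Np_hard} as the paper indicates.
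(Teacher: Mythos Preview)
Your proposal identifies the right structural lever --- for any $2$-cycle, at least one arc survives removal of an acyclic set, so strongly connected ``anchor'' blocks stay intact in $(V,A\setminus A')$ --- and among your candidate source problems, \textsc{NAE-3SAT} (equivalently \textsc{3-Uniform Hypergraph 2-Coloring}) is exactly what the paper reduces from. But what you have written is a plan, not a proof, and the one concrete scheme you sketch has a gap you do not close.

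The gap is precisely the tension you flag. With two anchors $P,Q$ and the rule ``every $P$--$Q$ crossing arc must go into $A'$,'' you then want each clause encoded as a directed cycle crossing the cut in both directions. But then the arcs \emph{forced} into $A'$ can themselves contain directed cycles: a single clause gadget already contributes a $P\to Q$ arc and a $Q\to P$ arc, and as soon as two clause gadgets share anchor vertices you get $2$-cycles or longer alternating cycles among the forced arcs, killing acyclicity of $A'$. Declaring that one should ``ensure all arcs forced into $A'$ already form an acyclic set'' is a requirement, not a construction; nothing in your sketch shows how to meet it while still encoding clauses. The \textsc{Feedback Arc Set} alternative is mentioned but not developed at all, and it is not apparent how a cardinality constraint on a feedback set would translate into the acyclicity-plus-disconnection structure here.

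The paper's construction avoids this by \emph{not} forcing all crossing arcs into $A'$. Instead, each hypergraph vertex $v$ gets its own strongly connected block $X_v$, linked to both anchors $U_0,U_1$ via a single directed $4$-cycle $v_0\to x_v\to v_1\to x_v'\to v_0$. Severing $X_v$ from $U_i$ requires both $X_v$--$U_i$ arcs in $A'$; severing it from both sides would require all four arcs, a cycle. Hence each $X_v$ remains attached to at least one side, and in any disconnecting solution to exactly one --- this \emph{is} the $2$-coloring. Hyperedge gadgets (further $4$-cycles routed through fresh auxiliary vertices, a separate set per hyperedge) then enforce non-monochromaticity; because each such cycle uses its own private vertices, the arcs placed into $A'$ across different gadgets never combine into a directed cycle. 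This per-vertex ``choose a side via a $4$-cycle'' mechanism, rather than a global ``delete all crossing arcs'' cut, is the missing idea in your sketch.
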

 \label{Proof:Acyclic_Np_hard}

	\begin{proof}
		We give a reduction from \textsc{3-Uniform Hypergraph 2-Coloring} which is \NP-hard~\citep{Lovasz73}. Let~$H=(V,E)$ be a 3-uniform hypergraph with~$|V|=n$ and~$|E|=m$. We construct a digraph~$D=(U,A)$ as follows: For each~$i\in\{0,1\}$, we define a node set~$U_i$ with the~$n+2m$ nodes
		\[U_i\coloneqq \{v_i\mid v\in V\} \cup \bigcup_{e\in E}\{e_i,e_i'\}.\]
		Moreover, for each node~$v\in V$, we define a node set~$X_v$ with the $2\deg(v)+2$ nodes
		\[X_v\coloneqq \{x_v,x_v'\}\cup \{x_{v,e,0},x_{v,e,1}\mid e\in E, v\in e\}.\]
		Finally, we define the node set~$Q\coloneqq \{q_{e,i},q_{e,i}'\mid e\in E, i\in\{0,1\}\}$ and let
		\[U\coloneqq U_0 \cup U_1 \cup \bigcup_{v\in V}X_v\cup Q.\]
		The arc set~$A$ is defined as follows: For~$i\in\{0,1\}$, we connect the nodes in~$U_i$ with $2(|U_i|-1)$ arcs to a strongly connected path, that is, for an arbitrary ordering~$b_1,\ldots,b_{|U_i|}$ of the nodes in~$U_i$, we insert the arcs~$(b_j,b_{j+1})$ and~$(b_{j+1},b_j)$ for each~$j\in[|U_i|-1]$. Analogously, we also connect all nodes in~$X_v$ for each~$v\in V$ to a strongly connected path. Moreover, for each~$v\in V$, we insert the cyclic arcs $(v_0,x_v)$, $(x_v,v_1)$, $(v_1,x_v')$, and~$(x_v',v_0)$.
		Finally, for each hyperedge $e=\{u,v,w\}\in E$ and each~$i\in\{0,1\}$, we insert the arcs
		\begin{compactitem}
			\item $(q_{e,i},q_{e,i}')$ and $(q_{e,i}',q_{e,i})$,
			\item $(x_{u,e,i}, e_i)$, $(e_i,q_{e,i})$, $(q_{e,i},e_i')$, and $(e_i',x_{u,e,i})$, and
			\item $(x_{v,e,i},q_{e,i})$, $(q_{e,i},x_{w,e,i})$, $(x_{w,e,i},q_{e,i}')$, and $(q_{e,i}', x_{v,e,i})$.
		\end{compactitem}
		Overall, the constructed digraph~$D$ contains~$O(n+m)$ many nodes and arcs.
		
		For the correctness, assume first that there is a 2-coloring of the nodes of~$H$ such that no hyperedge is monochromatic and let~$V_i\subseteq V$ denote the set of nodes with color~$i$. We construct a solution~$A'\subseteq A$ for~$D$ as follows:
		For each~$v\in V_0$, $A'$ contains the arcs~$(x_v,v_1)$ and~$(v_1,x_v')$, and for each~$v\in V_1$, $A'$ contains the arcs~$(v_0,x_v)$ and~$(x_v',v_0)$.
		Clearly, these arcs are acyclic.
		Further, consider a hyperedge~$e=\{u,v,w\}\in E$. Since~$e$ is not monochromatic, it follows that exactly one of its nodes is colored with one of the two colors, say~0 (the other case is analogous), and the two other nodes have color~1.
		If~$u$ is colored~0, then~$A'$ contains the arcs $(x_{u,e,1},e_1)$, $(e_1',x_{u,e,1})$, $(e_0,q_{e,0})$, and~$(q_{e,0},e_0')$.
		If~$v$ has color~0 (the case where~$w$ has color~0 is analogous), then~$A'$ contains the arcs $(x_{u,e,0},e_0)$, $(e_0',x_{u,e,0})$, $(x_{v,e,1},q_{e,1})$, $(q_{e,1},x_{v,e,1})$, $(x_{w,e,0},q_{e,0}')$, and $(q_{e,0},x_{w,e,0})$.
		It can easily be verified that~$A'$ is acyclic.
		Moreover,~$(U,A\setminus A')$ is not weakly connected since e.g.~all nodes in~$X_v$ with~$v\in V_0$ are disconnected from all nodes in~$U_1$.
		
		Conversely, assume that there is a solution~$A'\subseteq A$ for~$D$. First, observe that in~$D'\coloneqq (U,A\setminus A')$ all nodes in~$U_0$ are weakly connected (since they are strongly connected in~$D$). The same holds for~$U_1$ and for each~$X_v$, $v\in V$. Moreover, for each~$v\in V$, the set~$X_v$ is weakly connected to exactly one of the sets~$U_0$ or~$U_1$.
		To see this, note that~$X_v$ cannot be disconnected from both~$U_0$ and~$U_1$ due to the cycle involving the nodes~$x_v$ and~$x_v'$.
		It follows that also no~$X_v$ can be weakly connected to both~$U_0$ and~$U_1$ since then $D'$ would be weakly connected because also each node in~$Q$ is connected to some~$X_v$ due to its cyclic connections. Hence, we assign each node~$v\in V$ the color~$i\in\{0,1\}$ if and only if~$X_v$ is weakly connected to~$U_i$.
		
		It remains to check that each hyperedge~$e=\{u,v,w\}\in E$ is not monochromatic. Assume the contrary, that is,~$X_u$, $X_v$, and~$X_w$ are weakly connected to (wlog)~$U_0$.
		Then, by construction, also~$q_{e,1}$ and~$q_{e,1}'$ are weakly connected to~$U_0$. Since $X_u$ is not weakly connected to $U_1$ and therefore $A'$ contains $(x_{u,e,1},e_1)$ and $(e_1',x_{u,e,1})$, it follows that~$e_1$ is weakly connected to~$q_{e,1}$, since otherwise $A'$ would not be acyclic. Therefore, $e_1$ is also weakly connected to $U_0$, which yields a contradiction since then~$D'$ would be weakly connected. 
	\end{proof}
\begin{figure}
     \centering
     \begin{tikzpicture}

        \node (U0) at (-1.5,0) {$U_0$};
        \node (u0)  at (1,0) {$u_0$};
        \node (v0) [right of=u0]  {$v_0$};
        \draw (v0) edge [->] (u0);
        \draw (u0) edge [->] (v0);
        \node [right of=v0] (w0) {$w_0$};
        \draw (w0) edge [->] (v0);
        \draw (v0) edge [->] (w0);
        \node[right of=w0] (ldots) {$\ldots$};
        \draw (w0) edge [->] (ldots);
        \draw (ldots) edge [->] (w0);
        \node[right of=ldots] (e0) {$e_0$};
        \node[right of=e0] (e0') {$e_0'$};
        \draw (e0) edge [->] (e0');
        \draw (e0') edge [->] (e0);
        \draw (e0) edge [->] (ldots);
        \draw (ldots) edge [->] (e0);
        \node[right of=e0'] (ldots2) {$\ldots$};
        \draw (e0') edge [->] (ldots2);
        \draw (ldots2) edge [->] (e0');
        \node[ellipse, draw=black, fit=(u0) (w0) (v0) (ldots) (e0) (e0') (ldots2)] (all) {};
        \node[ellipse, draw=black, fill=blue, opacity=0.2,fit=(u0) (w0) (v0) (ldots) (e0) (e0') (ldots2)] (all) {};

        \node (U1) at (-1.5,-6) {$U_1$};
        \node (u1) at (1,-6){$u_1$};
        \node[right of=u1] (v1) {$v_1$};
        \node[right of=v1] (w1) {$w_1$};
        \draw (v1) edge [->] (u1);
        \draw (v1) edge [->] (w1);
        \draw (u1) edge [->] (v1);
        \draw (w1) edge [->] (v1);
        \node[right of=w1] (rdots) {$\ldots$};
        \node[right of=rdots] (e1) {$e_1$};
        \node[right of=e1](e1'){$e_1'$};
        \node[right of=e1'] (rdots2) {$\ldots$};
        \draw (e1) edge [->] (e1');
        \draw (e1') edge [->] (e1);
        \draw (e1) edge [->] (rdots);
        \draw (rdots) edge [->] (e1);
        \draw (rdots) edge [->] (w1);
        \draw (rdots2) edge [->] (e1');
        \draw (w1) edge [->] (rdots);
        \draw (e1') edge [->] (rdots2);
        \node[ellipse, draw=black, fit=(u1) (w1) (v1) (rdots) (e1) (rdots2)] (all) {};
        \node[ellipse, draw=black, fill=red,opacity=0.2, fit=(u1) (w1) (v1) (rdots) (e1) (rdots2)] (all) {};

        \node[shape=circle, draw = black, scale=0.4] (qe0) at (4,-1.5){};
        \node[shape=circle, draw = black, scale=0.4,right of =qe0] (qe0'){};
        \node[right of=qe0'] (Q0) {$Q_{e,0}$};
        \draw (qe0) edge [->] (qe0');
        \draw (qe0') edge [->] (qe0);
        \node[ellipse, draw=black, fit=(qe0) (qe0')] (all) {};

        \node[shape=circle, draw = black, scale=0.4] (qe1) at (4,-4.5){};
        \node[shape=circle, draw = black, scale=0.4,right of =qe1] (qe1'){};
        \node[right of=qe1'] (Q1) {$Q_{e,1}$};
        \draw (qe1) edge [->] (qe1');
        \draw (qe1') edge [->] (qe1);
        \node[ellipse, draw=black, fit=(qe1) (qe1')] (all) {};

         \node[shape=circle, draw = black, scale=0.4] (xv) at (2.5,-3){};
         \node[shape=circle, draw = black, scale=0.4,right of=xv] (xv'){};
         \node[shape=circle, draw = black, scale=0.4,right of=xv'] (xve0){};
         \node[shape=circle, draw = black, scale=0.4,right of=xve0] (xve1){};
         \node[right of =xve1] (ldotsv) {$\ldots$};
         \node (Xv) at (2.5,-3.5) {$X_v$};
         \node[ellipse, draw=black, fit=(xv) (xv') (xve0) (xve1) (ldotsv), inner sep = -0.1mm] (all) {};
         \draw (xv) edge[->] (xv');
         \draw (xv') edge[->] (xv);
         \draw (xve0) edge[->] (xv');
         \draw (xv') edge[->] (xve0);
         \draw (xve0) edge[->] (xve1);
         \draw (xve1) edge[->] (xve0);

        \node[shape=circle, draw = black, scale=0.4] (xu) at (-1.5,-3){};
          \node[shape=circle, draw = black, scale=0.4,right of=xu] (xu'){};
           \node[shape=circle, draw = black, scale=0.4,right of=xu'] (xue0){};
         \node[shape=circle, draw = black, scale=0.4,right of=xue0] (xue1){};
         \node[right of =xue1] (ldotsu) {$\ldots$};
         \node (Xu) at (-1.5,-3.5) {$X_u$};
         \node[ellipse, draw=black, fit=(xu) (xu') (xue0) (xue1) (ldotsu), inner sep = -0.1mm] (all) {};
         \draw (xu) edge[->] (xu');
         \draw (xu') edge[->] (xu);
         \draw (xue0) edge[->] (xu');
         \draw (xu') edge[->] (xue0);
         \draw (xue0) edge[->] (xue1);
         \draw (xue1) edge[->] (xue0);

         \node[shape=circle, draw = black, scale=0.4] (xw) at (6.5,-3){};
         \node[shape=circle, draw = black, scale=0.4,right of=xw] (xw'){};
         \node[shape=circle, draw = black, scale=0.4,right of=xw'] (xwe0){};
         \node[shape=circle, draw = black, scale=0.4,right of=xwe0] (xwe1){};
         \node[right of =xwe1] (ldotsw) {$\ldots$};
         \node(Xw) at (9.5,-3.5) {$X_w$};
         \node[ellipse, draw=black, fit=(xw) (xw') (xwe0) (xwe1) (ldotsw), inner sep = -0.1mm] (all) {};
         \draw (xw) edge[->] (xw');
         \draw (xw') edge[->] (xw);
         \draw (xwe0) edge[->] (xw');
         \draw (xw') edge[->] (xwe0);
         \draw (xwe0) edge[->] (xwe1);
         \draw (xwe1) edge[->] (xwe0);

        \draw (u0) edge [->] (xu);
        \draw (xu) edge [->] (u1);
        \draw (u1) edge [->] (xu');
        \draw (xu') edge [->] (u0);



        \draw (xue0) edge [->] (e0);
        \draw (e0') edge [->] (xue0);
        \draw (e0) edge [->] (qe0);
        \draw (qe0) edge [->] (e0');

        \draw (xve0) edge [->] (qe0);
        \draw (qe0') edge [->] (xve0);
        \draw (qe0) edge [->] (xwe0);
        \draw (xwe0) edge [->] (qe0');


        \draw (xue1) edge [->] (e1);
        \draw (e1') edge [->] (xue1);
        \draw (e1) edge [->] (qe1);
        \draw (qe1) edge [->] (e1');
        
        \draw (xve1) edge [->] (qe1);
        \draw (qe1') edge [->] (xve1);
        \draw (qe1) edge [->] (xwe1);
        \draw (xwe1) edge [->] (qe1');

    \end{tikzpicture}   
    \caption{The encoding of one hyperedge $
    \{u,v,w\} \in E$ in the digraph $D$. The cyclic directed paths connecting $X_v$ respectively $X_w$ with $U_0$ and $U_1$ are not drawn in order to not overload the figure.}
 \end{figure}

 We remark that our reduction implies a running time lower bound based on the Exponential Time Hypothesis\footnote{The Exponential Time Hypothesis asserts that \textsc{3-SAT} cannot be solved in $2^{o(n)}$ time where~$n$ is the number of Boolean variables in the input formula (\cite{IP01}).} (ETH).
	As discussed by \citet{KL16}, there is no algorithm solving a \textsc{3-Uniform Hypergraph 2-Coloring}-instance~$(V,E)$ in $2^{o(|E|)}$ time assuming ETH.
	Our polynomial-time reduction in the proof of \Cref{thm:acyc-NP} constructs a digraph~$D$ of size~$O(|V|+|E|)$. Notice that~$O(|V|+|E|)\subseteq O(|E|)$ since we can assume $|V|\le 3|E|$ (isolated nodes can trivially be removed).
	Hence, any algorithm solving \AcycDisc in $2^{o(|D|)}$ time would imply a~$2^{o(|E|)}$-time algorithm for \textsc{3-Uniform Hypergraph 2-Coloring}.

    \begin{corollary}\label{cor:AD-lower}
		\AcycDisc cannot be solved in~$2^{o(|D|)}$ time unless the ETH fails.
	\end{corollary}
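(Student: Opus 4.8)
The plan is to combine the two ingredients already assembled in the excerpt: the ETH lower bound for \textsc{3-Uniform Hypergraph 2-Coloring} stated just before the corollary (no $2^{o(|E|)}$-time algorithm under ETH, by \citet{KL16}), and the size bound on the digraph produced by the reduction underlying \Cref{thm:acyc-NP}. So the proof is essentially a contrapositive transfer of the lower bound through the polynomial-time reduction, being careful about the relationship between the instance sizes.

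First I would recall that the reduction behind \Cref{thm:acyc-NP} is polynomial-time and, as noted in the paragraph preceding the corollary, maps a \textsc{3-Uniform Hypergraph 2-Coloring}-instance $(V,E)$ to a digraph $D$ with $|D| \in O(|V|+|E|)$, and that we may assume $|V| \le 3|E|$ since isolated vertices can be deleted without changing colorability, hence $|D| \in O(|E|)$; write $|D| \le c\cdot|E|$ for a suitable constant $c$. Then I would argue by contradiction: suppose there is an algorithm $\mathcal A$ solving \AcycDisc in $2^{o(|D|)}$ time. Given an instance $(V,E)$ of \textsc{3-Uniform Hypergraph 2-Coloring}, first preprocess in polynomial time to remove isolated vertices, then apply the reduction to obtain $D$ in time $\poly(|V|+|E|)$, and run $\mathcal A$ on $D$; by correctness of the reduction this decides $(V,E)$. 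The total running time is $\poly(|E|) + 2^{o(|D|)} \le \poly(|E|) + 2^{o(c\cdot|E|)} = 2^{o(|E|)}$, contradicting the ETH-based lower bound for \textsc{3-Uniform Hypergraph 2-Coloring}. Hence no such algorithm exists unless the ETH fails.

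The only subtlety worth spelling out is the composition of the little-$o$ bounds: a running time $2^{o(|D|)}$ means $2^{g(|D|)}$ for some $g$ with $g(n)/n \to 0$, and substituting $|D| \le c\cdot|E|$ gives $g(|D|) \le g(c\cdot|E|)$ (assuming, as one may, that $g$ is nondecreasing), and $g(c\cdot|E|)/|E| = c\cdot g(c\cdot|E|)/(c\cdot|E|) \to 0$, so indeed $2^{g(|D|)} = 2^{o(|E|)}$; the additive polynomial overhead of the reduction is absorbed since $\poly(|E|) = 2^{O(\log|E|)} = 2^{o(|E|)}$. I do not expect a genuine obstacle here — the corollary is a routine ETH-transfer — but if anything needs care it is making sure the size bound $|D| = O(|E|)$ is actually established by the construction in the proof of \Cref{thm:acyc-NP} (it is, as stated), so that the exponent does not blow up by more than a constant factor.

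\begin{proof}
The reduction establishing \Cref{thm:acyc-NP} runs in polynomial time and, after removing isolated vertices from the input hypergraph $(V,E)$ (which we may assume satisfies $|V|\le 3|E|$), produces a digraph $D$ with $|D|\in O(|V|+|E|)\subseteq O(|E|)$; fix a constant $c$ with $|D|\le c\cdot|E|$. Suppose for contradiction that \AcycDisc can be solved in $2^{o(|D|)}$ time. Then, composing the reduction with this algorithm yields an algorithm for \textsc{3-Uniform Hypergraph 2-Coloring} running in time $\poly(|E|)+2^{o(|D|)}$. Since $|D|\le c\cdot|E|$, we have $2^{o(|D|)}=2^{o(|E|)}$, and the polynomial overhead is also $2^{o(|E|)}$, so the combined running time is $2^{o(|E|)}$. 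By \citet{KL16}, this contradicts the ETH. Hence \AcycDisc cannot be solved in $2^{o(|D|)}$ time unless the ETH fails.
\end{proof}
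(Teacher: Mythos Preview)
Your proposal is correct and follows exactly the argument the paper gives in the paragraph preceding the corollary: transfer the ETH lower bound for \textsc{3-Uniform Hypergraph 2-Coloring} through the linear-size reduction of \Cref{thm:acyc-NP}, using $|D|\in O(|E|)$ after removing isolated vertices. Your explicit handling of the little-$o$ composition is more detailed than the paper's one-line justification, but the approach is identical.
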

 \begin{proofwithcaption}{Proof of Theorem~\ref{thm:ReLUInj-NPhard}.}
 \label{proof:ReLUInj-NPhard}
		Containment in \NP is easy: The set~$I_C$ of a cell~$C$ with~$\rank(\mathbf{W}_C)<d$ serves as a certificate.
		
		For the \NP-hardness, we reduce from \AcycDisc which is \NP-hard by \Cref{thm:acyc-NP}.
		For a given digraph $D = (V=\{v_1,\ldots,v_n\},A=\{a_1,\ldots,a_m\})$, we construct the matrix $\mathbf{W} \in \R^{m\times (n-1)}$ as follows:
		For every arc $a_\ell=(v_i,v_j)\in A$, we add the row vector $\mathbf{w}_\ell\in \R^{n-1}$ with
        \[(\mathbf w_\ell)_k \coloneqq \begin{cases}1, &k=i\\-1, &k=j\\0, &\text{ else}\end{cases}.\]
		
        For the correctness, assume first that there is a solution~$A'\subseteq A$ for~$D$.
		Let~$I'\coloneqq\{i\mid a_i\in A'\}\subseteq [m]$.
		We claim that there is a cell~$C\in\mathcal{C}_\mathbf{W}$ with~$I_C\subseteq [m]\setminus I'$.
		To see this, let~$\pi \in\mathcal S_n$ be a permutation with~$\pi(i) < \pi(j)$ for each $(v_i,v_j)\in A'$.
		Such a permutation exists since~$A'$ is acyclic.
		Let~$x_n\coloneqq 0$ and let~$\mathbf{x}\in\R^{n-1}$ be such that
		$x_{\pi(1)}< x_{\pi(2)}< \cdots < x_{\pi(n)}$.
		Then, no neuron~$i\in I'$ corresponding to an arc~$a_i=(v_j,v_\ell)\in A'$ is active at~$\mathbf{x}$ since
		\[\inner{\mathbf{w}_i,\mathbf{x}} = \begin{cases}x_j - x_\ell, &j\neq n \wedge \ell\neq n\\ -x_\ell, &j=n\\
        x_j, &\ell=n\end{cases}< 0.\]
		Hence,~$\mathbf{x}$ is contained in some cell~$C$ with~$I_C\subseteq [m]\setminus I'$.
		Now, since~$(V,A\setminus A')$ is not weakly connected, it follows that~$\rank(\mathbf{W}_C)<n-1$.
		To see this, note that there must be a node~$v_i\in V$ that is not weakly connected to~$v_n$, that is, there is no undirected path from~$v_i$ to~$v_n$.
		If~$\rank(\mathbf{W}_C)=n-1$, then there exists a linear combination $\sum_{j\in I_C}c_j\mathbf{w}_{j} = \mathbf{e}_i$ of the $i$-th unit vector in~$\R^{n-1}$.
		But this implies the existence of an undirected path from~$v_i$ to~$v_n$ corresponding to some arcs in~$\{a_j\mid c_j\neq 0\}$, which yields a contradiction.
		To see this, note first that $(\mathbf{e}_i)_i=1$ implies that~$c_j\neq 0$ for some~$j\in I_C$ such that~$(\mathbf{w}_{j})_i\neq 0$.
		Clearly, $a_j=(v_i,v_n)$ or~$a_j=(v_n,v_i)$ is not possible.
		Hence,~$a_j$ must be an arc between~$v_i$ and some~$v_k\neq v_n$.
		But then, we have~$(c_j\mathbf{w}_{j})_k\neq 0$, whereas~$(\mathbf{e}_i)_k=0$.
		Therefore, there exists another~$j'\in I_C$ such that~$c_{j'}\neq 0$ and~$(\mathbf{w}_{j'})\neq 0$.
		Again, it is not possible that~$a_{j'}=(v_k,v_n)$ or~$a_{j'}=(v_n,v_k)$.
		However, since~$I_C$ is finite, repeating this argument yields a contradiction.
		
		For the reverse direction, let~$C\in\mathcal{C}_\mathbf{W}$ be a cell with~$\rank(\mathbf{W}_C)<n-1$.
		Then, there exists a point~$\mathbf{x}\in C$.
		Now, let~$i\in [m]\setminus I_C$ be a neuron corresponding to arc~$a_i=(v_j,v_\ell)$ that is not active at~$\mathbf{x}$.
		Then, this implies that $x_j - x_\ell < 0$ if~$j < n$ and~$\ell <n$.
		If~$j=n$, then this implies~$x_\ell > 0$ and if~$\ell=n$, then this implies $x_j < 0$.
		Hence, $(V,A')$ with~$A'\coloneqq \{a_i\mid i\in [m]\setminus I_C\}$ is acyclic since any cycle would lead to a contradiction.
		Now, we claim that $(V, A\setminus A')$ cannot be weakly connected.
		Otherwise, there exists an undirected path from each~$v_i$, $i\in[n-1]$ to~$v_n$.
		Let~$v_i=v_{j_0},v_{j_1},\ldots,v_{j_t}=v_n$ be the nodes of such a path along the arcs~$a_{\ell_1},\ldots,a_{\ell_t}$.
		Then, there exists the linear combination~$\sum_{k=1}^t c_k\mathbf{w}_{\ell_k}= \mathbf{e}_i$, where~$c_k\coloneqq 1$ if~$a_{\ell_k}=(v_{j_{k-1}},v_{j_k})$ and~$c_k\coloneqq -1$ if~$a_{\ell_k}=(v_{j_k},v_{j_{k-1}})$.
		But this implies $\rank(\mathbf{W}_C)=n-1$, which yields a contradiction.
	 \end{proofwithcaption}

 \subsection{Appendix to Section 4}
 \begin{proofwithcaption}{Proof of Lemma~\ref{lem:conecover}}
 \label{app:lem:conecover}

 We use strong duality of linear programming.
		To that end, let $\mathbf{A} \in \R^{m \times d}, \mathbf p \in \R^m$ such that $C =  \{\mathbf{x} \in \R^d \mid \inner{\mathbf{a},\mathbf{x}} \geq \mathbf{p}\}$ and let $\mathbf{W} \coloneqq (\mathbf{w}_1, \ldots ,\mathbf{w}_n)^T \in \R^{n \times d}$ and $\mathbf{b} \in \R^n$. Let $\varepsilon > 0$, then, since $C \subseteq \bigcup\limits_{i=1}^n H_{\mathbf{w}_i,b_i}^+$, it follows that the set \[ \{\mathbf x \in \R^d \mid -\inner{\mathbf{w},\mathbf{x}} \geq -(\mathbf{b} + \varepsilon\mathbf{1}), \inner{\mathbf{a},\mathbf{x}} \geq \mathbf{p}\} \subseteq  \{\mathbf x \in \R^d \mid -\inner{\mathbf{w},\mathbf{x}} > -\mathbf{b}, \inner{\mathbf{a},\mathbf{x}} \geq \mathbf{p}\}\] is empty and hence the following linear program does not admit a feasible solution.
		\begin{equation*}
			\begin{matrix}
				\displaystyle \min_\mathbf x & 0  \\
				\textrm{s.t.} & \inner{\mathbf{a},\mathbf{x}} & \geq & \mathbf{p}  \\
				& -\inner{\mathbf{w},\mathbf{x}} & \geq & -(\mathbf{b} + \varepsilon\mathbf{1})  \\
			\end{matrix}
		\end{equation*}
		By strong duality, the dual linear program
		\begin{equation*}
			\begin{matrix}
				\displaystyle \max_{(\mathbf{y},\mathbf{z})} & -(\mathbf{b} + \varepsilon\mathbf{1})^T \mathbf{y} +\mathbf{p}^T \mathbf{z} \\
				\textrm{s.t.} & \mathbf{z}^T\mathbf{A} - \mathbf{y}^T\mathbf{W}   & = & \mathbf{0} \\
				& \mathbf{y}, \mathbf{z}  & \geq & \mathbf{0}  \\
			\end{matrix}
		\end{equation*}
		has either no feasible solution or its objective value is unbounded. Since $\mathbf{y}=\mathbf{0}$ and $\mathbf{z} = \mathbf{0}$ yields a feasible solution, the latter is the case.
		In particular, there is a ray $\rho$ of the cone $\{(\mathbf{y},\mathbf{z}) \in \R^{n+m} \mid \mathbf{z}^T\mathbf{A} - \mathbf{y}^T\mathbf{W}  =  \mathbf{0},\mathbf{y}, \mathbf{z} \geq  \mathbf{0}\}$ such that the objective value is unbounded on~$\rho$.
		The dimension of the subspace  $\{(\mathbf y,\mathbf z) \in \R^{n+m} \mid \mathbf{z}^T\mathbf{A} - \mathbf{y}^T\mathbf{W}    =  \mathbf{0}\}$ is at least $n+m-d$. Therefore, $\rho$ as a $1$-dimensional subspace lies in the intersection of at least $n+m-d-1$ many hyperplanes of the form $\{(\mathbf y,\mathbf z) \in \R^{n+m}\mid (\mathbf y,\mathbf z)_i = 0\}$ and hence in the intersection of at least $n-d-1$ many hyperplanes of the form $\{(\mathbf y,\mathbf z) \in  \R^{n+m} \mid y_i = 0\}$.
		Let $B \subseteq [n]$ be the set of size at least $n-d-1$ such that $\rho \subseteq  \bigcap_{i \in B}\{(\mathbf y,\mathbf z) \in  \R^{n+m} \mid y_i = 0\}$ (this can be computed in polynomial time; see, e.g., \cite[Corollary 14.1g]{ilp_theory}) and $A \coloneqq [n] \setminus B$ its complement.
		
		It follows that $|A| \leq d+1$ and that the objective value of the following LP is still unbounded.
		\begin{equation*}
			\begin{matrix}
				\displaystyle \max_{(\mathbf{y},\mathbf{z})} & -(\mathbf{b} + \varepsilon\mathbf{1})^T \mathbf{y} +\mathbf{p}^T \mathbf{z}  \\
				\textrm{s.t.} & \mathbf{z}^T\mathbf{A} - \mathbf{y}_{A}^T\mathbf{W}_{A}   & = & \mathbf{0} \\
				& \mathbf{y}, \mathbf{z}  & \geq & \mathbf{0}  \\
			\end{matrix}
		\end{equation*}
		Again, by strong duality, this implies that the set \[ \{\mathbf x \in \R^d \mid -\mathbf{W}_{A}\mathbf{x} \geq -(\mathbf{b}_A + \varepsilon\mathbf{1}_A), \inner{\mathbf{a},\mathbf{x}} \geq \mathbf{p}\}\] is empty. Since $\varepsilon > 0$ was arbitrary, it follows that also the set \[ \{\mathbf x \in \R^d \mid -\mathbf{W}_{A}\mathbf{x} > \mathbf{b}_A, \inner{\mathbf{a},\mathbf{x}} \geq \mathbf{p}\}\] is empty, which means that $C \subseteq \bigcup\limits_{i \in A} H_{\mathbf{w}_i,b_i}^+$, proving the claim.
        
	\end{proofwithcaption}

 \paragraph{Proof of \Cref{thm:fpt}}   
\label{app:thm:fpt}
	\begin{lemma}\label{lem:correctness}
		\Cref{Layerinjectivity} is correct.
	\end{lemma}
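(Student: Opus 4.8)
The plan is to verify two complementary properties of \Cref{Layerinjectivity}, which together with \Cref{theorem:characterization} give the claim: \emph{soundness} --- any vector $\mathbf x$ returned by the algorithm satisfies $\rank(\{\mathbf w_i\in W\mid\mathbf w_i^T\mathbf x\ge 0\})<d$ --- and \emph{completeness} --- whenever such a vector exists, the algorithm does not answer ``yes''. The crucial bookkeeping invariant is that along any branch of the recursion $C$ grows monotonically and stays a subset of $W$, and that at every recursive call each row of $W$ is either in the current $C$, in the current $M$, or was discarded in \cref{filter} at some ancestor call; in the last case the row lies in $\spn(C')$ for that ancestor's (smaller) set $C'$, hence in $\spn(C)$. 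Thus the discarded rows, which the algorithm forgets, never change the rank of the set of active rows.

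For soundness, the returns in \cref{trivial1,trivial2} of \Cref{Layerinjectivity} are immediate: at $\mathbf 0$ every row is active and $\rank(W)<d$, and at the vector produced in \cref{trivial2} no row is active. If a vector $\mathbf x$ is returned in \cref{returnX} of \Cref{FindCell}, then all rows of $C$ are active at $\mathbf x$ (since $\mathbf c_i^T\mathbf x\ge 0$), all rows of the current $M$ are inactive (since $\mathbf m_i^T\mathbf x<0$ for $i\in I$), and every remaining row of $W$ was discarded and hence lies in $\spn(C)$. So the active rows at $\mathbf x$ span exactly $\spn(C)$, and since the call did not return ``no'' in \cref{checkNo} we have $\rank(C)<d$; thus $\mathbf x$ is a valid witness, and all remaining returns only propagate such a witness upward.

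For completeness, suppose $\mathbf x^\star\in\R^d$ satisfies $k\coloneqq\rank(\{\mathbf w_i\mid\mathbf w_i^T\mathbf x^\star\ge 0\})<d$. If $\rank(W)<d$ then \cref{trivial1} returns a witness; otherwise $\mathbf x^\star\neq\mathbf 0$ (else all rows are active, of rank $d$), and if some vector makes all rows strictly negative then \cref{trivial2} returns a witness. In the remaining case I claim, by induction on $d-\rank(C)$, that any call $\textnormal{FindCell}(C,M)$ whose search cone $\{\mathbf x\mid\forall i:\mathbf c_i^T\mathbf x\ge 0\}$ contains such an $\mathbf x^\star$ returns a witness (the base case $\rank(C)=d$ being vacuous). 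Under the hypothesis, $C$ consists of rows active at $\mathbf x^\star$, so $\rank(C)\le k<d$, and the search cone contains $\mathbf x^\star\neq\mathbf 0$, hence \cref{checkNo} does not return ``no''. If \cref{returnX} fires, we are done. Otherwise no point of the search cone makes all rows indexed by $I$ strictly negative, which is precisely the hypothesis of \Cref{lem:conecover} applied to the search cone and the rows $\{\mathbf m_i\mid i\in I\}$; thus \cref{cover} yields $A\subseteq I$ with the cone still covered by $\bigcup_{i\in A}H^+_{\mathbf m_i}$, so $\mathbf m_j^T\mathbf x^\star\ge 0$ for some $j\in A$. Since $j\in I$ we have $\mathbf m_j\notin\spn(C)$, so $\rank(C\cup\{\mathbf m_j\})=\rank(C)+1$ and $\mathbf x^\star$ lies in the search cone of the recursive call in \cref{recurse} for $j$; by the induction hypothesis that call returns a witness, which is propagated up. Applying the claim to $\textnormal{FindCell}(\emptyset,W)$ shows that whenever a witness exists it is returned, so \Cref{Layerinjectivity} outputs ``yes'' only when $\phi_{\mathbf W}$ is injective.

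I expect the main obstacle to be the careful tracking of the three categories of rows and the argument that the discarded rows provably lie in $\spn(C)$, since this fact underlies both the rank test in \cref{checkNo} and the validity of the witness returned in \cref{returnX}. The covering step in \cref{cover} is the other delicate point, but it is already supplied by \Cref{lem:conecover}; here it only remains to check that its hypothesis --- the search cone being covered by the half-spaces of the rows indexed by $I$ --- holds exactly when \cref{returnX} fails to produce a vector.
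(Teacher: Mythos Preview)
Your proof is correct and follows essentially the same approach as the paper's: the bookkeeping invariant you state (every row of $W$ lies in $C$, in $M$, or in $\spn(C)$) is equivalent to the paper's invariant $W\setminus\spn(C)\subseteq M\subseteq W$, and your completeness induction on $d-\rank(C)$ with hypothesis ``$\mathbf x^\star$ lies in the search cone'' is exactly the paper's induction with hypothesis $C\subseteq W_{\mathbf x^\star}$ (these conditions are the same). The only cosmetic difference is that the paper splits the inductive step into the cases $\rank(C)=k$ and $\rank(C)<k$, whereas you handle both uniformly via the vacuous base case $\rank(C)=d$.
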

\begin{proof}
		Assume that the algorithm outputs some~$\mathbf x\in\R^d$. If~$\mathbf x$ was returned in \Cref{trivial1} or \Cref{trivial2}, then this is clearly correct since either no cell has rank~$d$ or there is a rank-0 cell.
		If~$\mathbf x$ was returned by some call of FindCell (\Cref{FindCell}) in \Cref{returnX}, then
		this is correct since~$\rank(W_\mathbf x)\le\rank(\{\mathbf c_i\}_{i \in [n]}) < d$, where $W_\mathbf x \coloneqq \{\mathbf w_i\in W\mid \mathbf w_i\mathbf x +b_i \ge 0\}$.
		The second inequality holds since otherwise the algorithm would have returned ``No'' in \Cref{checkNo}.
		For the first inequality, we first observe the invariant that $W \setminus \spn(C)\subseteq M\subseteq W$ holds at any time during execution of FindCell.
		This is clear for the initial call of FindCell in \Cref{find} in \Cref{Layerinjectivity}.
		Also, within FindCell the property holds after \Cref{filter} and for the recursive calls in \Cref{recurse}.
		Now, $\rank(W_\mathbf x)\le \rank(C)$ holds since~$\mathbf m_i\mathbf x +b_i < 0$ for all $i\in[m]$ implies that~$W_\mathbf x\cap M= \emptyset$, and thus, by our invariant, we have~$C \subseteq W_\mathbf x \subseteq \spn(C)$.
		
		For the opposite direction, assume that there is an $\mathbf{x} \in \R^d$ such that $k \coloneqq \rank(W_\mathbf{x}) < d$.
		If $k=0$, then \Cref{Layerinjectivity} correctly returns some point of a rank-0 cell in \Cref{trivial2}.
		If~$k > 0$, then $W_\mathbf x$ contains at least one vector~$\mathbf w_i$.
		We claim that the call of FindCell in \Cref{find} will correctly return some point from a cell with rank at most~$k$.
		To this end, we show that FindCell($C$, $P$, $M$, $B$) (\Cref{FindCell}) returns a correct point whenever~$C\subseteq W_\mathbf x$.
		Clearly, if $\rank(C)=k$, then~$\mathbf x$ satisfies the conditions in \Cref{returnX} since $C\subseteq W_\mathbf x$, and thus $W_\mathbf x\subseteq \spn(C)$ while $M \subseteq W\setminus \spn(C)$ (due to \Cref{filter} and the invariant on~$M$).
		Hence, FindCell returns a correct point in this case.
		
		Now consider the case $\rank(C) < k$.
		If some point is returned in \Cref{returnX}, then this is correct (as already shown above).
		If no point satisfies the conditions in \Cref{returnX}, then it holds $\{\mathbf x\mid \forall i\in[n]:\mathbf c_i\mathbf x +p_i\ge 0\}\subseteq \bigcup_{i\in I}H^+_{\mathbf m_i}$ and
		by \Cref{lem:conecover}, we can compute the set~$A$ in \Cref{cover}.
		Note that $W_\mathbf x \cap \{\mathbf m_i\mid i \in A\}\neq \emptyset$.
		Hence, for at least one of the recursive calls in \Cref{recurse}, it holds that $C\cup\{\mathbf m_i\} \subseteq W_\mathbf x$.
		Moreover, $\rank(C\cup\{\mathbf m_i\})=\rank(C)+1$ since~$\mathbf m_i\not\in\spn(C)$ (due to \Cref{filter}).
		Hence, by induction, this call will return a correct point.
	\end{proof}

 	\begin{lemma}\label{lem:runtime}
		\Cref{Layerinjectivity} runs in~$O((d+1)^d\cdot \poly(S))$ time, where~$S$ denotes the input size.
	\end{lemma}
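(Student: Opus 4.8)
The plan is to bound the size of the search tree and the work done at each node separately, then multiply. First I would observe that the recursion in \texttt{FindCell} is structured so that the first argument $C$ grows by one vector with every recursive call on \Cref{recurse}, and that this new vector $\mathbf m_i$ is chosen from the filtered set on \Cref{filter}, hence $\mathbf m_i \notin \spn(C)$. Therefore $\rank(C)$ strictly increases along every root-to-leaf path in the recursion tree. Since \Cref{checkNo} returns immediately once $\rank(C)=d$, the depth of the recursion is at most $d$. Combined with the fact that the \textbf{foreach}-loop on \Cref{cover}--\Cref{recurse} branches into $|A|\le d+1$ subcalls (this is exactly what \Cref{lem:conecover} guarantees, applied to the current cone $\{\mathbf x\mid \forall i\in[n]:\mathbf c_i^T\mathbf x\ge 0\}$ and the vectors in $M$, after \Cref{returnX} has verified the cone is not already entirely on the negative side), the recursion tree has at most $\sum_{k=0}^{d}(d+1)^k = O((d+1)^d)$ nodes.

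Next I would argue that each individual call of \texttt{FindCell} (excluding recursive calls) and the preprocessing in \Cref{Layerinjectivity} run in polynomial time. The rank computations on \cref{trivial1,checkNo} are linear algebra; the emptiness/feasibility tests on \cref{trivial2,returnX} (``is there $\mathbf x$ with $\mathbf c_i^T\mathbf x\ge 0$ for all $i$ and $\mathbf m_i^T\mathbf x<0$ for all $i$?'', and ``is the cone $\{\mathbf x\mid \forall i: \mathbf c_i^T\mathbf x\ge 0\}$ trivial?'') are strict/non-strict linear feasibility problems solvable in polynomial time; the filtering on \Cref{filter} is again linear algebra; and the cover computation on \Cref{cover} is polynomial-time by \Cref{lem:conecover}. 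All of these run in $\poly(S)$ time, where we note that the bit-sizes of the numbers involved stay polynomially bounded because each recursive step only adds one of the original input rows to $C$ (no new numbers are created). Multiplying the number of tree nodes by the per-node cost gives the claimed bound $O((d+1)^d\cdot\poly(S))$.

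I would then briefly check the top-level structure: \Cref{Layerinjectivity} performs a constant number of polynomial-time steps before the single call \texttt{FindCell}$(\emptyset,W)$ on \Cref{find}, so its total running time is dominated by that call, which is $O((d+1)^d\cdot\poly(S))$ as established.

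The main obstacle I anticipate is not the tree-size bound — which is clean once one notes $\rank(C)$ strictly increases — but rather the bookkeeping needed to justify that the polynomial factor is genuinely polynomial in the \emph{input} size $S$ rather than in some intermediate representation. In particular one must confirm that the vectors passed around ($C\subseteq$ original rows, and $M$ obtained only by deletions and the span-filter) never require more than $\poly(S)$ bits, and that the LP-based feasibility tests and the cover subroutine of \Cref{lem:conecover} indeed return certificates/solutions of polynomially bounded encoding length; this is standard (e.g.\ via the polynomial-time solvability of linear programming and polynomial bounds on vertex/certificate sizes), but it is the step that requires the most care to state precisely.
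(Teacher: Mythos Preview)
Your proposal is correct and follows essentially the same approach as the paper's proof: bound the recursion depth by~$d$ via the strict rank increase of~$C$, bound the branching factor by~$d+1$ via \Cref{lem:conecover}, and observe that all per-node work is polynomial-time linear algebra or linear programming. Your extra remarks on bit-size bookkeeping go slightly beyond what the paper spells out, but they are standard and do not change the argument.
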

\begin{proof}
		Let~$S$ be the bit-length of~$W$.
		Clearly, \Cref{trivial1} and \Cref{trivial2} can be done in~$\poly(S)$ time via linear programming. As regards the running time of FindCell, note first that the recursion depth is at most~$d$ since every recursive call increases the rank of~$C$ (as already discussed in the proof of \Cref{lem:correctness}) and the recursion terminates when rank~$d$ is reached.
		Moreover, each call of FindCell branches into at most~$d+1$ recursive calls, that is, the search tree has size at most~$(d+1)^d$.
		Since all other computations within FindCell can be done in~$\poly(S)$ time (using linear programming and \Cref{lem:conecover}), we obtain the desired running time.
	\end{proof}

\subsection{Appendix to Section 5}
 \begin{proofwithcaption}{Proof of \Cref{prop:verification_hard}}
 \label{app:verification_hard}

 We reduce from the complement of \ReLUPos.
        Let $\mathbf{W}_1 \in \R^{n\times k},\mathbf{W}_2 \in \R^{1 \times n}$ be an instance of \ReLUPos. 
        Let the dimension $d$, the affine space $A$ (as a basis) and the point $\mathbf{z} \in \R^d$ be the output of the algorithm (on input $k$) that exists due to the fact that $(S_d)_{d\in \N}$ is a reasonable sequence of sets. More precisely, let $\varepsilon > 0$ be chosen such that \[B=B_\varepsilon(\mathbf{z})\coloneqq\{\mathbf x\in A \mid \|\mathbf x - \mathbf z\|_2<\varepsilon\} \subseteq S_d.\]
        Let $P \colon \R^d \to A$ be the orthogonal projection to $A$ and $T \colon A \to \R^k$ an isometric isomorphism obtained by mapping the normalized basis of $A$ to the standard basis of $\R^k$.
        The composition $(T \circ P) \colon \R^d \to \R^k$ is an affine map and let it be given by a matrix $\mathbf{A} \in \R^{k \times d}$ and a vector $\mathbf{b} \in \R^k$. 
        Then, $\widetilde{\mathbf{W}}_2 \coloneqq \mathbf{W}_2,\widetilde{\mathbf{W}}_1 \coloneqq \mathbf{W}_1 \cdot \mathbf{A},\widetilde{\mathbf{b}}_1 \coloneqq \mathbf{W}_1 \cdot \mathbf{b} - \mathbf{W}_1 \cdot \mathbf{A} \cdot \mathbf{z}, \widetilde{b}_2 \coloneqq  t$ form an instance of \textsc{2-Layer ReLU $(S,t)$-Verification} and it holds that 
        \begin{equation}
		\label{eq:verification}
		\mathbf{W}_2 \cdot \phi_{\mathbf{W}_1}(\inner{\mathbf{a},\mathbf{x}} + \mathbf{b}) > 0 \iff \widetilde{\mathbf{W}}_2\cdot\phi_{\widetilde{\mathbf{W}}_1,\widetilde{\mathbf{b}}_1}(\mathbf{z}+\mathbf{x})+\widetilde{b}_2 > t.
        \end{equation}

        For the correctness of the reduction, assume that there is a $\mathbf{y} \in \R^k$ such that $\mathbf{W}_2 \cdot \phi_{\mathbf{W}_1}(\mathbf{y}) > 0$. Then, by the positive homogeneity of the map $\mathbf{x} \mapsto \mathbf{W}_2 \cdot \phi_{\mathbf{W}_1}(\mathbf{x})$, we also have that $\mathbf{W}_2 \cdot \phi_{\mathbf{W}_1}(\mathbf{y}') > 0$ for $\mathbf{y}' \coloneqq \frac{\varepsilon}{2} \frac{\mathbf y}{\|\mathbf{y}\|}$. Since $T$ is an isometric isomorphism, there is an $\mathbf{x} \in A$ with $\|\mathbf{x}\| = \frac{\varepsilon}{2}$ such that $T(\mathbf{x}) =\mathbf{y}'$. Hence, $\mathbf{z} + \mathbf{x} \in B\subseteq S_d$ and \Cref{eq:verification} implies that $\widetilde{\mathbf{W}}_2\cdot\phi_{\widetilde{\mathbf{W}}_1,\widetilde{\mathbf{b}}_1}(\mathbf{z}+\mathbf{x})+\widetilde{b}_2> t$.
		
		Conversely, if there is an $\mathbf{x} \in S_d$ such that $\widetilde{\mathbf{W}}_2\cdot\phi_{\widetilde{\mathbf{W}}_1,\widetilde{\mathbf{b}}_1}(\mathbf{x})+\widetilde{b}_2 > t$, then \Cref{eq:verification} implies that $\mathbf{W}_2 \cdot \phi_{\mathbf{W}_1}(\mathbf{A}(\mathbf{x}-\mathbf{z})+\mathbf{b}) > 0$, concluding the proof.
 \end{proofwithcaption}
 
\begin{proofwithcaption}{Proof of Lemma~\ref{lemma:charsurjectivity}}
\label{app:lemma:charsurjectivity}
\ref{lemma:char_surj_1}) If there is a $\mathbf{v}^+\in\R^d$ with $f_0(\mathbf{v}^+) = a > 0$, then for all $b \in [0,\infty)$ it holds that $f_0(\frac{b}{a}\mathbf{v}^+) = b$ due to positive homogeneity of~$f_0$ (analogously for all $b \in (-\infty,0]$ with $\mathbf{v}^-$). The other direction is trivial.

\ref{lemma:char_surj_2})
            We start with some preliminary observations.
            First, there is a constant $C \in \R$ that only depends on the weights and biases of $f$ such that $\|f-f_0\|_\infty \leq C$ (\cite{HBDS21} Proposition~2.3).
		Moreover, due to continuity of~$f$ it holds that $f$ is surjective if and only if for every $a<b\in \R$ there are $a',b' \in \R$ with $a'<a$ and $b' > b$ such that $a',b' \in f(\R^d)$.
		
		Now, for the first direction, assume that $f$ is surjective.
                Then there are $a < -C$ and $b > C$ such that $a,b \in f(\R^d)$.
                Since $\|f-f_0\|_\infty \leq C$, it follows that there are $\mathbf{v}^+$ and $\mathbf{v}^-$ with $f_0(\mathbf{v^+}) > 0$ and $f_0(\mathbf{v}^-) < 0$, implying with~\ref{lemma:char_surj_1}) that $f_0$ is surjective. 
		
		For the converse direction, let $a < 0 < b$. By surjectivity of $f_0$, we have that $a- 2C\in f_0(\R^d)$ and $b + 2C\in f_0(\R^d)$.
                Hence, since $\|f-f_0\|_\infty \leq C$, it follows that there are $a'<a$ and $b'>b$ such that $a',b'\in f(\R^d)$, implying surjectivity of $f$.

\ref{lemma:char_surj_3}) Follows directly from \ref{lemma:char_surj_1}) and \Cref{observation:linearoncells}.
\end{proofwithcaption}
 
 \begin{proofwithcaption}{Proof of Lemma~\ref{lemma:polytime_zeromap}}
 \label{app:lemma:polytime_zeromap}
      	We define the sets $I^+ \coloneqq \{i \in [n] \mid (\mathbf{W}_2)_i = 1\}$ and $I^- \coloneqq [n] \setminus I^+$ and let $\mathbf{w}_1,\ldots,\mathbf{w}_{n}$ be the rows of~$\mathbf{W}_1$.
      	
        First, we can assume that for any $\mathbf{v} \in \R^d$ there is at most one $i \in [n]$ such that $\mathbf{w}_i \in \pos(\mathbf{v}) \coloneqq \{\lambda \mathbf{v} \mid \lambda \geq 0\}$. To see this, let $\mathbf{w}_i \in \pos(\mathbf{w}_j)$ for some $i,j \in [n]$.
        If $i,j \in I^+$, then we can simply delete the rows $\mathbf{w}_i$ and $\mathbf{w}_j$ and add a new row $\mathbf{w}_i + \mathbf{w}_j$ without changing the map~$f$ (clearly the same works for $i,j \in I^-$). If $i \in I^+$ and $j \in I^-$, then we can delete the rows $\mathbf{w}_i$ and $\mathbf{w}_j$ and, if $\|\mathbf{w}_j\|_2 \leq \|\mathbf{w}_i\|_2$, add a new row $\mathbf{w}_i - \mathbf{w}_j$ with output weight $1$ or add a new row $\mathbf{w}_j - \mathbf{w}_i$ with output weight $-1$ if $\|\mathbf{w}_i\|_2 \leq \|\mathbf{w}_j\|_2$ without changing the map $f$.
        Note that we can transform any matrix $\mathbf{W}_1$ to such a form in polynomial time.

        Now, if for every row $\mathbf{w}_j$ there is a row $\mathbf{w}_{j'}$ such that $\mathbf{w}_j=-\mathbf{w}_{j'}$ and $(\mathbf{W}_2)_j = -(\mathbf{W}_2)_{j'}$, it follows that $f$ is a linear map and hence we can easily check whether it is the zero map.
        Otherwise, assume that for a row $\mathbf{w}_j$ there is no such row $\mathbf{w}_{j'}$. Then $\mathbf{w}_{j}$ induces a hyperplane $H_j \coloneqq \{\mathbf{x} \in \R^d \mid \inner{\mathbf{w}_j,\mathbf{x}}=0\}$ such that $\phi_{\mathbf W_1}$ is not linear in every open neighborhood of any $\mathbf{x} \in H_j$ and hence the map $f$  is not linear and in particular cannot be the zero map. Thus, we can check in polynomial time whether $f = 0$.
        
    	Now in the case of $f \neq 0$, let $I_j \coloneqq \{i \in [n] \mid \mathbf{w}_i \in \spn(\mathbf{w}_j)\}$ and note that $|I_j| \leq 2$.	
        For $i \in [n]$, we define the hyperplane  $H_i  \coloneqq  \{\mathbf{x} \in \R^d \mid \inner{\mathbf{w}_i,\mathbf{x}} = 0\}$. By definition, there exists an $\mathbf{x} \in H_j \setminus \left(\bigcup_{i \in [n] \setminus  I_j} H_i\right)$.
        Now, for \[\varepsilon \coloneqq \min \{1,\nicefrac{1}{2}\min\limits_{i \in [n] \setminus I_j} \min\limits_{\mathbf y \in H_i} \|\mathbf{x} - \mathbf{y}\|_2 > 0\}\]
        it holds that $\{\mathbf{x} + \delta \mathbf{w}_j, \mathbf{x} - \delta \mathbf{w}_j\} \subset \R^d  \setminus \left(\bigcup_{i \in [n] \setminus I_j} H_i\right)$ for all $\delta \in (0,\varepsilon)$.
        Let $\mathbf{x}' \coloneqq \mathbf{x} +\varepsilon \mathbf{w}_j$ and $\mathbf{x}'' \coloneqq \mathbf{x} -\varepsilon \mathbf{w}_j$ and let $I' \coloneqq  \{i \in [n] \mid \inner{\mathbf{w}_i,\mathbf{x}}' > 0\}$ and $I'' \coloneqq  \{i \in [n] \mid \inner{\mathbf{w}_i,\mathbf{x}}'' > 0\}$.
        
        We will argue now that either in the cell $C' \in \Sigma_{\mathbf W_1}$ containing $\mathbf{x}'$ or in the cell $C'' \in \Sigma_{\mathbf W_1}$ containing~$\mathbf{x}''$ we find the desired $\mathbf{x}^*$ with $f(\mathbf{x}^*) \neq 0$.
        Note that it is sufficient to prove that~$f$ cannot be the zero map on $C'$ and $C''$.
        We prove this by showing that \[\mathbf{W}_2 \circ (\mathbf{W}_1)_{C'} - \mathbf{W}_2 \circ (\mathbf{W}_1)_{C''} \neq \mathbf{0}.\]
        Note that $(I' \cup I_j) \setminus \{j\} = I''$.
        If $I_j = \{j\}$, then \[\left(\sum_{i \in I' \cap I^+} \mathbf{w}_i -  \sum_{i \in I' \cap I^-} \mathbf{w}_i\right) - \left(\sum_{i \in I'' \cap I^+} \mathbf{w}_i -  \sum_{i \in I'' \cap I^-} \mathbf{w}_i\right) = \pm \mathbf{w}_j\neq  \mathbf 0.\]
        If $I_j = \{j,j'\}$, then we have $\mathbf{w}_{j'} = - \lambda \mathbf{w}_j$ for some $\lambda >0$ and hence \[\left(\sum_{i \in I' \cap I^+} \mathbf{w}_i -  \sum_{i \in I' \cap I^-} \mathbf{w}_i\right) - \left(\sum_{i \in I'' \cap I^+} \mathbf{w}_i -  \sum_{i \in I'' \cap I^-} \mathbf{w}_i\right) = \pm \mathbf{w}_j \pm \mathbf{w}_{j'}\] equals $\mathbf 0$ if and only if $\lambda=1$ and $(\mathbf{W}_2)_j = -(\mathbf{W}_2)_{j'}$, which we assumed not to be the case.
        \end{proofwithcaption}

 \paragraph{Proposition~\ref{prop:poscutNPhard}}
 \label{proof:poscut}
 In order to prove that \textsc{Positive Cut} is \NP-complete, we reduce from the following NP-hard problem (\cite{Bonsma10}).
 	\problemdef{Densest Cut}
	{A graph $G=(V,E)$ and $t \in \Q \cap [0,1]$.}
	{Is there a subset $S \subseteq V$ such that $\frac{|E(S,V\setminus S)|}{|S|\cdot |V\setminus S|}>t$?}
    \begin{proposition}\label{prop:poscutNPhard}
		\textsc{Positive Cut} is \NP-complete.
	\end{proposition}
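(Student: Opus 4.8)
The plan is to prove membership in \NP\ and then \NP-hardness. Membership is immediate: a vertex subset $S\subseteq V$ is a polynomial-size certificate, and the cut weight $w(E(S,V\setminus S))$ is computable in polynomial time. For hardness I would reduce from the classical (unweighted) \textsc{Max Cut} decision problem --- given a graph $G=(V,E)$ with $n\coloneqq|V|$, $m\coloneqq|E|$, and an integer $k$, is there $S\subseteq V$ with $|E(S,V\setminus S)|\ge k$? --- which is well known to be \NP-complete. We may assume $1\le k\le m$ (all other cases are trivial), and we write $\mathrm{maxcut}(G)$ for the maximum cut size of $G$.

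Given such an instance, I construct a weighted graph $G'=(V',E',w)$ by a ``hub gadget'': set $V'\coloneqq V\cup\{h_0,h_1\}$ for two new vertices; keep every edge of $E$ with weight $1$; for every $v\in V$ add the two edges $\{v,h_0\}$ and $\{v,h_1\}$, each of weight $-T$ with $T\coloneqq m+1$; and add the single edge $\{h_0,h_1\}$ of weight $L\coloneqq nT-k+1$. All weights are polynomially bounded integers, so $G'$ is computable in polynomial time.

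Correctness rests on two observations about a bipartition $(A,B)$ of $V'$. If $h_0$ and $h_1$ lie on the \emph{same} side, the cut weight is at most $0$: moving any nonempty set of original vertices to the opposite side loses at least $2T=2m+2$ on hub edges while gaining at most $m$ on $E$-edges, so the trivial bipartition (weight $0$) is optimal among these. If $h_0$ and $h_1$ lie on \emph{opposite} sides, then, writing $S$ for the set of original vertices sharing $h_0$'s side, each $v\in V$ is incident to exactly one crossing hub edge, so the hub edges contribute exactly $-Tn$; hence the cut weight is $L-Tn+|E_G(S,V\setminus S)|=|E_G(S,V\setminus S)|-k+1$. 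Combining the two cases, the maximum cut weight of $G'$ equals $\max\{0,\ \mathrm{maxcut}(G)-k+1\}$, which --- since all quantities are integral, so ``$>0$'' means ``$\ge 1$'' --- is positive if and only if $\mathrm{maxcut}(G)\ge k$. This establishes the reduction.

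The substantive step is the design of the hub gadget, which must simultaneously (a) make it never profitable to keep $h_0$ and $h_1$ on the same side (this pins down $T=m+1$), and (b) make the total hub contribution a \emph{constant} $-Tn$ independent of how $V$ is split, so that the additive shift $L-Tn$ converts the cardinality threshold ``$\ge k$'' into the sign test ``$>0$'' demanded by \textsc{Positive Cut}. Once the gadget is set up, the remainder --- the two cut-weight computations and the polynomial size check --- is routine bookkeeping; I therefore expect the gadget construction and the two case-analyses above to be the only real work.
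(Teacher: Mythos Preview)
Your reduction from \textsc{Max Cut} is correct. One small inaccuracy: the maximum cut weight of $G'$ is not $\max\{0,\ \mathrm{maxcut}(G)-k+1\}$ but simply $\mathrm{maxcut}(G)-k+1$, since every Case~1 cut has weight at most $-m-2$, which is dominated by any Case~2 cut (weight at least $1-k\ge 1-m$). This does not affect the conclusion ``positive iff $\mathrm{maxcut}(G)\ge k$'', so the argument goes through.

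The paper takes a different route: it reduces from \textsc{Densest Cut} (is there $S$ with $|E(S,V\setminus S)|/(|S|\cdot|V\setminus S|)>t$?) rather than \textsc{Max Cut}. Writing $t=a/b$, the paper simply reweights the complete graph on the \emph{same} vertex set~$V$, assigning weight $(b-a)b$ to edges of~$G$ and $-ab$ to non-edges; a one-line computation gives $w(E'(S,V\setminus S))=b^2|E(S,V\setminus S)|-ab\,|S||V\setminus S|$, which is positive exactly when the density exceeds~$t$. Your gadget is arguably more elementary in that it starts from the textbook-standard \textsc{Max Cut} and requires no auxiliary hardness citation; on the other hand, the paper's construction is slicker in that it introduces no new vertices and no structural gadget --- the threshold is encoded purely in the edge weights, and the correctness calculation is a single algebraic identity rather than a two-case analysis with hub bookkeeping. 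Either approach is perfectly adequate for the proposition.
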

 \begin{proof} 
		We reduce from \textsc{Densest Cut}. Let $(V,E)$ be a graph and $t= \frac{a}{b}$ with $a < b \in \N$ and let $w(F) \coloneqq \sum_{e \in F} w(e)$.
		We construct the complete graph $K_{|V|} = (V,E'\coloneqq {V\choose 2})$ with edge weights $w \colon E' \to \Z$ by
		\[w(\{i,j\}) = 
		\begin{cases}
			-ab & \{i,j\} \notin E \\
			(b-a)b & \{i,j\} \in E
		\end{cases}.\]                    
		Note that for any $S \subseteq V$, it holds that 
		\begin{align*}
			w(E'(S,V\setminus S)) &=
			(b-a)b
			\cdot |E(S,V\setminus S)| - ab  (|S|\cdot|V\setminus S| - |E(S,V\setminus S)|)  \\&= b^2 \cdot |E(S,V\setminus S)| - ab \cdot |S| \cdot |V\setminus S|. 
		\end{align*}
		Hence, we have
		\[w(E'(S,V \setminus S)) > 0 \iff \frac{b^2 \cdot |E(S,V\setminus S)|}{ab \cdot |S|\cdot |V\setminus S| } > 1 \iff \frac{ |E(S,V\setminus S)|}{|S| \cdot |V\setminus S| } > \frac{a}{b}=t,\]
		proving the correctness of the reduction.
	\end{proof}
        
\begin{proofwithcaption}{Proof of Theorem~\ref{thm:Posi-NP}}
\label{app:thm:Posi-NP}
          The problem is contained in NP since, by \Cref{lemma:charsurjectivity},
          it is sufficient to define a ray $\rho$ as a certificate for positivity.
          Since rays are one-dimensional subspaces, they are the intersection of $d-1$ hyperplanes corresponding to rows of $\mathbf{W}_1$.
		Hence, the~$d-1$ rows of~$\mathbf W_1$ that determine~$\rho$ form a polynomial-time verifiable certificate.
		
		For the \NP-hardness, we reduce from \textsc{Positive Cut}. 
		Given a weighted graph $(G=(V,E),w)$ with $V = [d]$ and $|E|=n$, we define the matrices $\mathbf W_1\in\R^{2n\times d}$ and $\mathbf W_2\in\R^{1\times 2n}$ as follows:
                For each~$e=\{i,j\}\in E$, $\mathbf W_1$ contains two rows $\mathbf w_e\coloneqq \mathbf e_i -\mathbf e_j$ and $\mathbf w_e'\coloneqq -\mathbf w_e$.
              The corresponding entries of~$\mathbf W_2$ are set to $w(e)$.
		Thus, the 2-layer ReLU neural network computes the map $f\colon\R^d\to\R$ with
		\[f(\mathbf x)=\sum_{\{i,j\}\in E} w(\{i,j\}) \cdot ([x_i-x_j]_+ + [x_j-x_i]_+).\]
		
		For the correctness, we start with some preliminary observations.
		For a subset $S \subseteq V$, let $\mathbf r_S \coloneqq  \sum_{i \in S} \mathbf e_i \in\R^d$ and $\mathbf{r}_S' \coloneqq - \sum_{i \in V\setminus S} \mathbf e_i \in\R^d$
		and note that
		\[ [(\mathbf r_S)_i-(\mathbf r_S)_j]_+ + [(\mathbf r_S)_j-(\mathbf r_S)_i]_+ = 
		\begin{cases}
			0, & \{i,j\} \notin E(S,V \setminus S) \\
			1, & \{i,j\} \in E(S,V \setminus S)
		\end{cases}\] 
		as well as 
		\[ [(\mathbf{r}_S')_i-(\mathbf{r}_S')_j]_+ + [(\mathbf{r}_S')_j-(\mathbf{r}_S')_i]_+ = 
	\begin{cases}
		0, & \{i,j\} \notin E(S,V \setminus S) \\
		1, & \{i,j\} \in E(S,V \setminus S)
	\end{cases}\]  
		and therefore
		\[f(\mathbf r_S)=f(\mathbf{r}_S')=\sum\limits_{\{i,j\}\in E(S,V \setminus S)} w(\{i,j\})=w(E(S,V \setminus S)).\]
		
		As regards the correctness, if~$(G,w)$ is a yes-instance, that is, there exists a subset~$S\subseteq V$ with~$w(E(S,V\setminus S))>0$, then~$f(\mathbf r_S)>0$.
		
		Conversely, assume that there is a $\mathbf{v} \in \R^d$ with~$f(\mathbf v) > 0$.
		Let~$\pi\in\mathcal S_{d+1}$ be a permutation such that~$v_{\pi(1)}\le v_{\pi(2)}\le \cdots\le v_{\pi(d+1)}$, where $v_{d+1} \coloneqq 0$.
		Since all hyperplanes defined by hidden neurons are of the form $\{\mathbf x\in\R^d\mid x_i=x_j\}$ for some $i$ and $j$, the map $\phi_{\mathbf W_1}$ is linear within the pointed $d$-dimensional cone \[C \coloneqq \{\mathbf{x} \in \R^d \mid x_{\pi(1)} \leq \cdots \leq x_{\pi(d+1)}\} = \bigcap_{i=1}^{d} \{\mathbf{x} \in \R^d \mid x_{\pi(i)} \leq x_{\pi(i+1)}\},\] where again $x_{d+1} \coloneqq 0$. 
		Hence, by linearity of the output layer and \Cref{observation:linearoncells}, the value~$f(\mathbf v)$ is a conical combination of the values of $f$ on the ray generators of the extreme rays of~$C$ and hence there is a ray generator $\mathbf{r}$ in $C$ such that $f(\mathbf{r}) > 0$.
                All extreme rays of $C$ are the intersection of $C$ and $d-1$ hyperplanes of the form $\{\mathbf{x} \in \R^d \mid x_{\pi(i)} = x_{\pi(i+1)}\}$.
                We denote these rays by \[\rho_k \coloneqq \{\mathbf x \in \R^d \mid  x_{\pi(1)} = \cdots = x_{\pi(k-1 )}\leq x_{\pi(k)} = \cdots = x_{\pi(d+1)} \}\]
                for $k\in[d]$.
		Let $k\in[d]$ and $S_k\coloneqq \{\pi(k), \ldots, \pi(d+1)\}$.
                If $d+1\not\in S_k$, then $\mathbf{r}_{S_k}$ generates $\rho_k$ and otherwise $\mathbf{r}_{S_k}'$ generates $\rho_k$.
                Since $f$ is positive on one of these $d$ ray generators, we can conclude that there is an $S \subseteq V$ such that $f(\mathbf{r}_S) = f(\mathbf{r}_S') > 0$ which implies~$w(E(S,V\setminus S))>0$.
	\end{proofwithcaption}

\begin{proofwithcaption}{Proof of \Cref{prop:zonotope_containment}}
\label{app:zonotope_containment}
Given the map $f(\mathbf{x})=  \mathbf{W}_2 \cdot \phi_{\mathbf{W}_1}(\mathbf{x})$ (where~$\mathbf W_2\in\{-1,1\}^{1\times n}$), we define the sets $I^+ \coloneqq \{i \in [n] \mid (\mathbf{W}_2)_i = 1\}$ and $I^- \coloneqq [n] \setminus I^+$ and let $\mathbf{W} \coloneqq \mathbf{W}_1$.
       Note that we have 
	   \[f(\mathbf{x}) = \sum_{i \in I^+} \max \{0,\inner{\mathbf{w}_i,\mathbf{x}}\} -\sum_{i \in I^-} \max \{0,\inner{\mathbf{w}_i,\mathbf{x}}\},\]
	   and therefore $f = \varphi^{-1}(Z^+) - \varphi^{-1}(Z^-)$, where 
	   \[Z^+ \coloneqq \sum_{i \in I^+} \conv \{\mathbf 0,\mathbf{w}_i\}= Z(\mathbf{W}_{I^+}),\] and  \[Z^- \coloneqq \sum_{i \in I^-} \conv \{\mathbf 0,\mathbf{w}_i\}= Z(\mathbf{W}_{I^-}).\]
       Note that the support functions $\varphi^{-1}(Z^+)$ and  $\varphi^{-1}(Z^-)$ can only attain nonnegative values and $Z^+\subseteq Z^-$ implies that $\varphi^{-1}(Z^+) \leq \varphi^{-1}(Z^-)$. Moreover, if for a $\mathbf{v} \in Z^+$, it holds that $\varphi^{-1}(Z^+)(\mathbf{v}) > \varphi^{-1}(Z^-)(\mathbf{v})$, then $\mathbf{v} \notin Z^-$. Therefore, there exists a $\mathbf{v} \in \R^d$ such that $\varphi^{-1}(Z^+)(\mathbf{v}) > \varphi^{-1}(Z^-)(\mathbf{v})$ if and only if $Z^+ \nsubseteq Z^-$. Since any two zonotopes are of this form, this concludes the proof.
       \end{proofwithcaption}

\end{document}